\DeclareMathOperator{\BR}{\mathbb R}
\DeclareMathOperator{\col}{\mathrm{col}}
\DeclareMathOperator{\diag}{\mathrm{diag}}
\newtheorem{lemma}{\bf Lemma}
\newtheorem{assumption}{\bf Assumption}
\newtheorem{theorem}{\bf Theorem}
\newtheorem{remark}{\bf Remark}
\definecolor{darkgreen}{rgb}{0,0.5,0}
\definecolor{purple}{rgb}{1,0,1}
\newcommand{\kibitz}[2]{\ifnum\Comments=1\textcolor{#1}{#2}\fi}
\title[AAMAS-2023 Formatting Instructions]{Decentralised and Cooperative Control of Multi-Robot Systems through Distributed Optimisation}
\author{Yi Dong$^\S$, Zhongguo Li$^\dag$, Xingyu Zhao$^\S$, Zhengtao Ding$^\ddag$, Xiaowei Huang$^\S$}
\affiliation{
\institution{$^\S$University of Liverpool\country{United Kingdom}, \{yi.dong, xingyu.zhao,  xiaowei.huang\}@liverpool.ac.uk}}
\affiliation{
  \institution{$^\dag$University College London \country{United Kingdom}, zhongguo.li@ucl.ac.uk}}
\affiliation{
  \institution{$^\ddag$University of Manchester\country{United Kingdom}, zhengtao.ding@manchester.ac.uk}}
\begin{abstract}
Multi-robot cooperative control has gained extensive research interest due to its wide applications in civil, security, and military domains. This paper proposes a cooperative control algorithm for multi-robot systems with general linear dynamics. The algorithm is based on distributed cooperative optimisation and output regulation, and it achieves global optimum by utilising only information shared among neighbouring robots. Technically,  a high-level distributed optimisation algorithm for multi-robot systems is presented, which will serve as an optimal reference generator for each individual agent. Then, based on the distributed optimisation algorithm, an output regulation method is utilised to solve the optimal coordination problem for general linear dynamic systems. The convergence of the proposed algorithm is theoretically proved. Both numerical simulations and real-time physical robot experiments are conducted to validate the effectiveness of the proposed cooperative control algorithms.
\end{abstract}
\keywords{Cooperative Control, Distributed Optimisation, Optimal Coordination, Multi-robot Systems, Output Regulation}
\newcommand{\BibTeX}{\rm B\kern-.05em{\sc i\kern-.025em b}\kern-.08em\TeX}
\begin{document}

%%% The following commands remove the headers in your paper. For final 
%%% papers, these will be inserted during the pagination process.

\pagestyle{fancy}
\fancyhead{}

%%% The next command prints the information defined in the preamble.
\begin{textblock*}{20cm}(1cm,1cm)
	\textcolor{red}{{\large Accepted by The 22nd International Conference on Autonomous Agents and Multiagent Systems (AAMAS 2023).}}
\end{textblock*}

\maketitle 

%%%%%%%%%%%%%%%%%%%%%%%%%%%%%%%%%%%%%%%%%%%%%%%%%%%%%%%%%%%%%%%%%%%%%%%%

\section{Introduction}

%Advanced computing, communication, and sensing technologies 
%are being integrated into robotic operations and control, which 
%enable intelligent robots to interconnect and cooperate to accomplish more complex and challenging tasks \cite{ren2022event}. 
Cooperation of multiple robots to accomplish complex and challenging tasks \cite{ren2022event,yang2019survey} has been made possible by recent advances in high-performance computing, fast communication, and affordable onboard sensors.   
%Multi-agent systems have attracted significant attention in different areas, including space exploration \cite{yliniemi2014multirobot}, agriculture \cite{zhang2017development}, transportation \cite{hirata2000coordinated}, surveillance \cite{lopez2012intelligent}, where multiple robots are physically distributed and interconnected to coordinate their missions and achieve holistic properties. 
Nevertheless, it remains a challenge  to design decentralised algorithms for real-world multi-robot systems. {In this paper, we design an output-regulation based distributed optimisation algorithm to control the physical multi-robot systems.}
% cooperation as  distributed consensus \cite{}, which can then be solved with distributed optimisation {\color{blue}[Zhongguo: what does this mean?]}.

%Although distributed algorithms have been widely studied, 
%are widely studies theoretically due to their reduced computational burden and high scalability \cite{chen2019control}, how to 
%it remains a challenging problem to practically apply them to 
%cooperate and extend to 
%real multi-robot systems.
%remains a challenging problem.

%To deal with the generalised cooperation, we formalise the cooperation problem as a constrained optimisation problem, and use the optimisation objective to specify the cooperation goal. This formalism is more expressive as we can specify optimisation objective with respect to reference points instead of the initial states. 

On distributed multi-agent systems, several research topics, including target aggregation, trajectory tracking, containment and formation control, can be formulated 
%or recast to 
as consensus problems \cite{qin2016recent}. However, when it comes to the solutions to the distributed consensus, existing methods are either designed for overly-simplified problems or based on overly-simplified models of the agents. For example, in many settings, the consensus value is defined with respect to the initial states of the agents \cite{shi2009global,zou2021sampled,martinovic2022cooperative,wang2018prescribed,kuriki2014consensus}, and even if this constraint was relaxed, the agents may be modelled as a single-integrator (see Equation~\eqref{equ:single-integrator} for definition) \cite{li2021distributed, zuo2019fixed, cherukuri2016initialization, yi2016initialization}. While imposing stronger constraints 
%on the problem or the model 
leads to simpler mathematical proof on the theoretical results (such as convergence and correctness), the unrealistic constraints may result in a significant gap between theoretical guarantees and practical utility. 

In this paper, to address the cooperation between multiple robots, we relax the constraints from two perspectives. First, we work with consensus problems whose consensus value is defined over reference points, instead of initial states. A reference point may or may not be an initial state. Such generalisation is of practical importance, because in many applications the agents are initialised randomly and the agents' goals might be independent from their initial states. {Second, we consider linear systems, which generalise the single-integrator model by defining the dynamics of agents with linear operators (see Equations~\eqref{eqn: 2b} and \eqref{eqn: 2c} for definition). }

While considering a more general setting, we show that the theoretical guarantees are not compromised. That is, our algorithm can achieve both correctness (Section~\ref{parta}) and convergence (Sections \ref{partb} and \ref{partc}). This is owing to the novel distributed optimisation algorithms proposed in the paper. 
%
%\yi{In many real applications, the final agreement does not rely on their initial conditions. 
% For example, when the autonomous vehicle switching their target
% , and instead it usually requires solving some optimisation problems to collaboratively derive the consensus value, e.g., in online machine learning problem \cite{li2021consensus} and resource allocation for smart grid \cite{dong2020fully,dong2021short}. 
%To release the constraints of initial points, the optimisation objectives should be introduced to instead of the original control targets, that is, the final destination of the application depends on reference points rather than initial points. In this way, the difference between target and position needs to be additionally considered in the algorithm, which evolves a gradient-based optimization item at local. 
%For these more general-able problem, scholars have proposed several initialisation-independent algorithms based on distributed optimisation techniques \cite{li2021distributed, zuo2019fixed, cherukuri2016initialization, yi2016initialization}.
%}
Distributed optimisation aims to solve an optimisation problem where the global cost function is composed of a set of $N$ local objectives $f_i(y)$ \cite{Li2021Automatica}, i.e., $\min_{y} \sum_{i=1}^{N} f_i (y) $. Due to limited communication and the requirement of local privacy protection, the local objective function $f_i(y)$ is only known to agent $i$. In distributed optimisation, each agent can only cooperate with its neighbours by exchanging non-sensitive information. 
Although this is not a new problem, existing methods \cite{qiu2016distributed,li2022exponential,tran2019distributed} either assume single-integrator models for agents or are based on continuous systems. {For robotic systems, particularly the high-level control of the robotic systems as we aim to address in this paper, the discrete-time system and heterogeneous assumption on the robotic systems are arguably more realistic, while algorithms and their associated proofs cannot be easily transferred% from continuous systems to discrete systems
, which motivates our work.} Intuitively, our algorithm proceeds by every agent moving according to a recursive expression (i.e., considering not only on the current time but also the history) about a gradient over its local objective and the information collected from its neighbours (see Equation~\eqref{eqn: controller linear system}). {Also, we utilise distributed output regulation techniques to ensure that the formulated linear multi-robot systems can track the dynamic references in real-time \cite{ding2003global,ding2013back}.}

The proposed algorithm is implemented for a consensus control problem using a physical multi-robot system consisting of 4 Turtlebot robots. {The communication graph of the physical robots is designed as an undirected ring. Each robot is controlled and optimised only based on the information from itself and connected neighbours.
% , while asymptotically converging to the global optimal point. 
Although different robots have their respective local targets, they are eventually %controlled
moved to the global optimal point since the final consensus value is generated by solving the real-time optimisation problems, which is independent of the initial states.}
% \xiaowei{need more details on the use case and the experimental results}

% Contemplating the characteristics of real robot systems, this paper develops a novel distributed discrete-time cooperative control algorithm for multi-robots. The proposed algorithm can handle not only the single-integrator system but also the linear systems since the output regulation technique has also been integrated in our algorithm, which can guarantee the consensus for linear dynamic systems. In the proposed framework, each robot possesses its own local information by meanings of local estimation algorithms or sensor readings, while communicating with its neighbouring agents to achieve the overall target. Furthermore, due to the advantages of distributed optimisation, the proposed approach is initialisation-free, and therefore can avoid any procedure for initialisation during optimisation process. 

The major contributions of 
%the proposed distributed optimisation strategy for the multi-robot system 
this work
are summarised as follows:
\begin{enumerate}
    \item A distributed discrete-time cooperative control algorithm is proposed and the convergence of our algorithm has been theoretically proved. Different from most existing studies, e.g., \cite{ren2007information, qu2009cooperative} and references therein, which are extensively concentrated on continuous-time systems, the proposed algorithm in this paper is ready for implementation on digital robots and platforms. 
    \item {A composite approach combining distributed optimisation and output regulation is developed for heterogeneous linear systems. Different from the initialisation-dependent consensus problems, the proposed approach lays the foundation for the interaction between optimisation and control. } 
    \item {The proposed algorithm has been successfully validated on a real multi-robot system, where the errors and noises are handled by the communication among different agents. Furthermore, the reproducibility and replicability of our work are guaranteed since all source codes are available on Github for open access.}
\end{enumerate}

The rest of this paper is organised as follows. The mathematical preliminaries are summarised and the researched problem is also formulated in Section \ref{sec_pre}. An output regulation based distributed optimisation approach for multi-robot consensus protocol is proposed in Section \ref{sec_algorithm}. Simulation results and corresponding analysis are presented in Section \ref{sec_sim}. Finally, Section \ref{sec_con} concludes this paper.

\section{Related Work}
{This paper studies the control algorithm of multi-agent systems. The distributed algorithm designed in this paper is based on the consensus problem that requires a distributed protocol to drive a group of agents to achieve an agreement on states.}

{Initiated from \cite{olfati2004consensus}, consensus based distributed cooperative control problems have been widely studied and tactfully generalised to different specific sub-problems in recent years, such as finite-time, event-triggered, time-delayed, and switching-topology based cooperation problems. 
Shi and Hong considered the coordination problem of aggregation to a convex target set for a multi-agent system \cite{shi2009global}. 
Zou \textit{et al.} studied the coordinated aggregation problem of a multi-agent system while considering communication delays and applying a projection operator to guarantee the final consensus value within a target area \cite{zou2021sampled}.
Martinovi\'{c} \textit{et al.} proposed a distributed observer-based control strategy to solve the leader-following tracking problem \cite{martinovic2022cooperative}. 
Wang \textit{et al.} presented a distributed consensus and containment algorithm for the finite-time control of a multi-agent system based on time-varying feedback gain \cite{wang2018prescribed}.
Kuriki and Namerikawa illustrated a consensus-based cooperative formation control strategy with collision-avoidance capability for a group of multiple unmanned aerial vehicles \cite{kuriki2014consensus}. 
% {\color{blue}Please be aware that only there are two or more authors in the rest of list, et al. can be used. In this case [7] is not satisfied. Also note that the surname used here is not the authors of [7]. Please carefully check all the remaining references. }
%The final consensus value of the above-mentioned distributed algorithms is based on the initial position of each agent, which can only be applied to the case where the initial values of all agents are known. 
For all the above-mentioned solutions, the consensus value is hinged on the initial states of the agents, for example, the midpoint of the agents' initial locations. This is a severe restriction as in many practical scenarios, agents are initialised randomly and the goal of cooperation is without any correlation with the initial states. }

{To eliminate the initialisation step, the multi-agent consensus problem becomes a distributed optimisation problem when the consensus value is required to minimise the sum of local cost functions known to the individual agents. 
Qiu \textit{et al.} proposed a distributed optimisation protocol to minimise the aggregate cost functions while considering both constraint and optimisation \cite{qiu2016distributed}.
Li \textit{et al.} designed a proportional–integral (PI) controller to solve the optimal consensus problem and introduced event-triggered communication mechanisms to reduce the communication overhead \cite{li2022exponential}. 
% Li and Hong investigated a distributed optimisation algorithms for both non-sparse and sparse semi-definite programming problems
% Amir \textit{et al.} studied an co-design optimisation formation control framework, where the communication is based on the fulfillment of dynamic event-triggering conditions \cite{amini2020dynamic}.
Tran \textit{et al.} investigated two time-triggered and event-triggered distributed optimisation algorithms to reduce communication costs and energy consumption \cite{tran2019distributed}. 
Ning \textit{et al.} studied a fixed-time distributed optimisation protocol to guarantee the convergence within a certain steps for multi-agent system \cite{ning2017distributed}.}
%{\color{blue}[Zhongguo: I feel this is not well-presented. What are the limitations of those algorithms and how they related to this paper?]\xingyu{Indeed, we should link these works to this paper and position our work in the context}}

%nedic2009distributed discrete/single-integrator system
%nedic2010constrained discrete/single-integrator/project
%nedic2014distributed discrete/singl/time-varying graph
%wang2015distributed continuous/linear system/heterogeneous system
%zhao2017distributed continuous/linear system/non-heterogeneous system
%zuo2018adaptive continuous/heterogeneous/linear
%ning2020bipartite single/continuous/time-varying function
%ning2017distributed fix-time/single/continuous
%hennes2012multi AAMAS/multi-robot/not consensus

In practice, it is desirable to realise consensus in discrete-time domain for real robotic applications. However, the aforementioned distributed control and optimisation works \cite{shi2009global,zou2021sampled,martinovic2022cooperative,wang2018prescribed,qiu2016distributed,li2022exponential,tran2019distributed,zuo2018adaptive,dong2019demand,dong2022short,wang2015distributed,zhao2017distributed,nedic2009distributed,nedic2010constrained,nedic2014distributed,ning2020bipartite,ning2017distributed,li2011consensus,wang2010distributed} can only works for either continuous or non-heterogeneous systems. 
As a result, it is difficult to guarantee the convergence of heterogeneous linear systems without initialisation in practical scenarios. 
Motivated by the observations above, in this paper, %we aim to solve the distributed optimisation problem by using a distributed output-regulation consensus approach. 
{the aim of this work is to solve distributed optimal coordination problem for discrete-time and heterogeneous linear systems. We rigorously prove that consensus of heterogeneous linear systems can be achieved, while the global costs are minimised. By the interdisciplinary nature of the target problem, it spans across different problem specifics including single integrate/linear system, continuous-time/discrete-time, and heterogeneous/non-heterogeneous. We 
summarise and list a few related references papers to Table. \ref{table:scope} to highlight the scope and position of this paper.}
% Please add the following required packages to your document preamble:
% \usepackage{multirow}
\begin{table}[]
\caption{{Scope of the proposed method.}}
\resizebox{\columnwidth}{!}{
\label{table:scope}
\begin{tabular}{|l|ll|ll|}
\hline
\multirow{2}{*}{} & \multicolumn{2}{l|}{Single Integrator}                                                           & \multicolumn{2}{l|}{Linear System}                                           \\ \cline{2-5} 
                  & \multicolumn{1}{l|}{Continuous-time} & Discrete-time                                                       & \multicolumn{1}{l|}{Continuous-time}                            & Discrete-time        \\ \hline
Heterogeneous     & \multicolumn{1}{l|}{-}          & -                                                              & \multicolumn{1}{l|}{\cite{wang2010distributed,zuo2018adaptive}}   & \textbf{this paper}      \\ \hline
Non-heterogeneous & \multicolumn{1}{l|}{\cite{shi2009global,zou2021sampled,martinovic2022cooperative,wang2018prescribed,qiu2016distributed,ning2017distributed,ning2020bipartite,kuriki2014consensus}}          & \cite{nedic2009distributed,nedic2010constrained,nedic2014distributed} & \multicolumn{1}{l|}{\cite{zhao2017distributed,li2022exponential,wang2015distributed}} & \cite{li2011consensus} \\ \hline
\end{tabular}
}
\end{table}
% However, local information from one agent is usually subject to a number of physical constraints, such as the agent's sensor capability and spatial-temporal constraints, and therefore local information cannot be fully trusted for control problem. Along with recent advances in control and optimisation of multi-agent systems, cooperative control has attracted significant research attention (references). A group of mobile agents, e.g., UAVs, UGVs or UUVs, are deployed to achieve more accurate control. 

% Optimal Consensus Recovery of Multi-agent System Subjected to Agent Failure(无优化，single，考虑recovery from single agent failure)]
% Sampled-data distributed protocol for coordinated aggregation of multi-agent systems subject to communication delays（无优化，single integrator，考虑延迟和projection）
%Cooperative tracking control of single-integrator multi-agent systems with multiple leaders（无优化，single，tracking problem，leader following）
% A Distributed Swarm Aggregation Algorithm for Bar Shaped Multi-Agent Systems （基于优化，single，考虑）
% Distributed constrained optimal consensus of multi-agent systems(基于优化，single integrator，最小化aggregated cost)
\section{Preliminaries}\label{sec_pre}
\textit{Notations:} Let $\mathbb{R}^n$ be the set of vectors with dimension $n>0$. Let $\|x\|$ and $x^T$ be the standard Euclidean norm and the transpose of $x\in\mathbb{R}^n$, respectively. $I_p$ is the compatible identity matrix with dimension $p>0$ and $\otimes$ denotes the Kronecker product.
{$\col(\cdot)$ represents the column vector.}
\subsection{Graph Theory}
Following \cite{ren2005consensus}, a directed graph $\mathcal{G(V,E,A)}$ consists of  $\mathcal{V}=\{\nu_{1},\linebreak\cdots,\nu_{n}\}$ as a node set and $\mathcal{E}\in \mathcal{V}\times\mathcal{V}$ as an edge set. If the node $\nu_{i}$ is a neighbour of node $\nu_{j}$, then $(\nu_{i},\nu_{j})\in \mathcal{E}$. A directed graph is strongly connected if there exists a directed path that connects any pair of vertices. $\mathcal{A}$ is the adjacency matrix and we let $[\mathcal{A}]_{ij}=a_{ij}$ 
%as the adjacency matrix of $\mathcal{G(V,E,A)}$, 
where $a_{ij}>0$ if $(\nu_{i},\nu_{j})\in \mathcal{E}$ and $a_{ij}=0$ otherwise. Let $\mathcal{D}$ be the degree matrix of graph $\mathcal{G(V,E,A)}$ and {$\mathcal{L=D-A}$ be the Laplacian matrix. We let $l_{ij}$ be the element of matrix $\mathcal{L}$.} %Therefore, 
% \begin{equation}
% 	[\mathcal L]_{ij}=\left\{\begin{array}
% 		ll_{ij}=-a_{ij}, i\neq j\\l_{ii}=\sum_{i\neq j}a_{ij},
% 	\end{array}\right.
% \end{equation}
% The information of consensus algorithm is updated using a difference equation $\xi_{i}(t+1)=\sum_{j=1}^{n}d_{ij}(t)\xi_{j}(t)$. Where $d_{ij}(t)$ is the entry of a row-stochastic matrix $\mathcal{D}_{n}$ which has $\mathcal{D}\bm{1=1}$ and $k\in\left\lbrace 0,1,\cdots\right\rbrace$ is the time index, where $\bm{1}=(1,1,\cdots,1)^{T}$. $d_{ii}(t)>0$ for all $i=1,2,\cdots,n$, and $d_{ij}(t)>0$, $\forall i\neq j$, if $(i,j)\in\mathcal{E}$ and $d_{ij}(t)=0$ otherwise. The compact update law is 
% \begin{equation}
% 	\bm{\xi}(t+1)=\mathcal{D}(t)\bm{\xi}(t)
% \end{equation}
In this paper, we assume that the information can be shared among the agents with complete information flow, formally stated in the following assumption. 

\begin{assumption}\label{asm: graph connectivity}
The communication graph is undirected and connected. 
\end{assumption}
Under this assumption, it follows that the Laplacian matrix $\mathcal L$ is semi-positive definite, and zero is a simple eigenvalue of $\mathcal L$ with an associated eigenvector $1_N$. For more detailed properties of the graph, please refer to \cite{qu2009cooperative}.

% \subsection{Robotic Systems Modelling}
% Refer to e.g., \cite{qu2009cooperative} for non-holonomic vehicle kinematic model (Chapter 1). Refer to e.g., \cite{huang2019motion,mondal2019comparison} to see how such a nonlinear system can be linearised to the general format 

% \begin{equation}\small\label{eqn: linear system dyanmics}\begin{aligned}
%     & x_i(k+1) = A_ix_i(k) +B_iu_i(k)\\
%     & y_i(k) = C_ix_i(k)
%     \end{aligned}
% \end{equation}
% where $x_i(k)\in \BR^n$, $u_i(k) \in \BR^p $ and $y_i(k) \in \BR^q$ are system state, control input and output, respectively, and $ A_i \in \BR^{n\times n}, B_i\in \BR^{n\times p}, C_i \in \BR^{q\times n} $ are constant matrices. 

% (Tasks: I notice that autonomous ground vehicles are usually represented by kinematic model, with longitudinal, lateral, and yaw states, which is naturally a nonlinear modelling. In order to solve this problem, we need to linearise the model, and consequently lead to a linear system dynamics with time-varying parameters.) 

\subsection{Optimal Coordination}
This paper considers an optimal coordination problem where a group of network-connected robots are designed to solve the following optimisation problem
% \begin{subequations}\label{eqn: optimal coordination problem}
% \begin{equation}\small\label{eqn: 2a}
%     \min_{y \in \BR^p } \ \   \sum_{i=1}^{N} f_i(y)  
% \end{equation}
% \begin{equation}\small\label{eqn: 2b}
%     \text{s.t.} \ \      x_i(k+1) = A_ix_i(k) +B_iu_i(k)
% \end{equation}
% \begin{equation}\small\label{eqn: 2c}
%     y_i(k) = C_ix_i(k)
% \end{equation}
% \end{subequations}
\begin{subequations}\label{eqn: optimal coordination problem}
\begin{gather}\small\label{eqn: 2a}
    \min_{y \in \BR^p } \ \   \sum_{i=1}^{N} f_i(y)  \\
% \end{equation}
% \begin{equation}\small\label{eqn: 2b}
    \text{s.t.} \ \      x_i(k+1) = A_ix_i(k) +B_iu_i(k)\label{eqn: 2b}\\
% \end{equation}
% \begin{equation}\small\label{eqn: 2c}
    y_i(k) = C_ix_i(k)\label{eqn: 2c}\\
    \text{for}\ \ i = 1,\cdots,N \nonumber
\end{gather}
\end{subequations}
% % \begin{equation}\label{eqn: optimal coordination problem}
% \begin{align}\small\label{eqn: optimal coordination problem}
%      \min_{y \in \BR^p } \ \   \sum_{i=1}^{N} f_i(y)  \label{eqn: 2a}\\
%     %  \end{equation}
%     %  \begin{equation}\small\label{eqn: 2b}
%      \text{s.t.} \ \      x_i(k+1) = A_ix_i(k) +B_iu_i(k)\label{eqn: 2b}\\
%     %  \end{equation}
%     %  \begin{equation}\small\label{eqn: 2c}
%                     y_i(k) = C_ix_i(k)\label{eqn: 2c}
% \end{align}
% % \end{equation}
where $N$ is the number of robots and  $x_i(k)\in \BR^n$ is the state variable of the robot $i$ that generally represents its position, speed, force and other information. $y_i(k) \in \BR^q$ is the system output of $i$th robot, which can be viewed intuitively as an observation of the robotic system since the actual state of the robot is invisible in some real robotic systems. $u_i(k) \in \BR^p$ is the system control input for robot $i$. For example, the control input can be the torque of a motor or the force of a robotic system. %, and  are 
%system 
% local state, control input and output, respectively, and 
{The equations \eqref{eqn: 2b} and \eqref{eqn: 2c} represent the discrete-time linear systems, which satisfy both additivity and homogeneity \cite{chen1984linear}.} $ A_i \in \BR^{n\times n}, B_i\in \BR^{n\times p}, C_i \in \BR^{q\times n} $ are constant matrices, which indicate the \emph{heterogeneous} robotic system dynamics. 
{The linear system can be reduced to a single-integrator system if $A=I_n, B=1_{n\times p}^T$ and $C=1_{q\times n}^T$.}
$f_i(y)$ are the smooth convex function and privately known to the $i$th robot.

Under Assumption~\ref{asm: graph connectivity}, we can reformulate the optimal coordination problem. In distributed cooperative control, each individual agent will be allocated a local decision variable, denoted as $y_i$. To solve the coordination problem~\eqref{eqn: optimal coordination problem}, it is required that the local decision variables $y_i$ eventually reach to the same optimal value, i.e. $y_i=y_j, \forall i,j\in \mathcal V$. In a connected graph, it is equivalent to require $(\mathcal L\otimes I_q) Y = 0$ by noting that the null-space of $\mathcal L$ is $1_N$. 
\begin{lemma}[\cite{Li2021Automatica}]
Let Assumption~\ref{asm: graph connectivity} hold. The optimisation problem \eqref{eqn: optimal coordination problem} can be equivalently reformulated as 
\begin{equation}\small\label{problem3}
    \begin{aligned}
         \min_{y_i \in \BR^q, \forall i \in \mathcal V }  & \ \  \sum_{i=1}^{N} f_i(y_i) \\
     \text{s.t.} &  \ \  \eqref{eqn: 2b}\  \text{and}\  \eqref{eqn: 2c} \\
                & \ \ (\mathcal L\otimes I_q) Y = 0
    \end{aligned}
\end{equation}
where $Y= \col(y_1,y_2\dots, y_N)$.
\end{lemma}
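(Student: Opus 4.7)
The plan is to exploit the null-space structure of the graph Laplacian to collapse the $N$ local decision variables $y_1,\dots,y_N$ in \eqref{problem3} into a single common value, thereby recovering \eqref{eqn: optimal coordination problem}. Because the dynamic constraints \eqref{eqn: 2b}--\eqref{eqn: 2c} appear identically in both formulations, the argument reduces to a purely algebraic characterisation of the consensus constraint $(\mathcal L\otimes I_q)Y=0$; no dynamic-systems or convergence argument is required at this stage.

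The first step is to invoke the spectral property of $\mathcal L$ recalled immediately after Assumption~\ref{asm: graph connectivity}: under an undirected connected graph, $\mathcal L$ is symmetric positive semidefinite and zero is a simple eigenvalue with associated eigenvector $1_N$. Therefore $\ker(\mathcal L)=\mathrm{span}\{1_N\}$, so $\mathcal L z = 0$ if and only if $z = c\,1_N$ for some scalar $c$. The second step lifts this to the Kronecker-product constraint: writing $Y=\col(y_1,\dots,y_N)\in\BR^{Nq}$ and reading the $i$-th $q$-block of $(\mathcal L\otimes I_q)Y$ as $\sum_{j=1}^N l_{ij}\,y_j$, and using $\ker(\mathcal L\otimes I_q)=\ker(\mathcal L)\otimes\BR^q$, it follows that $(\mathcal L\otimes I_q)Y=0$ if and only if $Y=1_N\otimes y$ for some $y\in\BR^q$, i.e.\ $y_1=y_2=\cdots=y_N=y$.

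To conclude, I would verify the equivalence of the two problems in both directions. Given any feasible $(y,\{x_i\},\{u_i\})$ for \eqref{eqn: optimal coordination problem}, setting $y_i:=y$ produces $Y=1_N\otimes y$ that is feasible for \eqref{problem3} with objective $\sum_{i=1}^N f_i(y_i)=\sum_{i=1}^N f_i(y)$; conversely, any feasible point of \eqref{problem3} must, by the preceding step, have the $y_i$ common across agents, and so projects to a feasible point of \eqref{eqn: optimal coordination problem} with the same objective value. This bijection of feasible points preserves the objective, so the two problems have identical optimal values and corresponding optima. The only real difficulty is the Kronecker bookkeeping in the second step; every remaining ingredient is an immediate consequence of Assumption~\ref{asm: graph connectivity} and the verbatim copy of the dynamics \eqref{eqn: 2b}--\eqref{eqn: 2c} across the two formulations.
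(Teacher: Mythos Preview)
Your proposal is correct and follows essentially the same approach as the paper: both rely on the fact that, under Assumption~\ref{asm: graph connectivity}, the null space of $\mathcal L$ is $\mathrm{span}\{1_N\}$, so $(\mathcal L\otimes I_q)Y=0$ forces $y_i=y_j=y^*$ for all $i,j$, collapsing \eqref{problem3} to \eqref{eqn: optimal coordination problem}. The paper's justification (the short paragraph following the lemma) only spells out the direction from \eqref{problem3} to \eqref{eqn: optimal coordination problem}, whereas you give both directions and the Kronecker bookkeeping explicitly, but the underlying argument is the same.
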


{Solving problem \eqref{eqn: optimal coordination problem} is equivalent to solving the problem \eqref{problem3} when the graph is undirected and connected. Let $Y^* = \col(y_1^*,\cdots,y_N^*)$ be the optimal solution of the problem \eqref{problem3}, which means $y_i^*=y_j^*=y^* , \forall i,j \in \mathcal{V}$ by $(\mathcal L\otimes I_q) Y^* = 0$. Then, we have $\sum_{i=1}^{N} f_i(y_i^*) = \sum_{i=1}^{N} f_i(y^*)$, which implies $y^*$ is also an optimal solution to the problem \eqref{eqn: optimal coordination problem}.}

\subsection{Problem Formulation}
% (Task: Note that here the local objective functions should be derived from some meaningful problems, e.g., robotic swarm problem (e.g., \cite{soria2021nature, soria2022distributed}, source seeking (peak point swarm
% tracking \cite{park2020cooperative, ristic2020decentralised}) or other promising applications. For the time being, we have not made any assumptions on the objective functions, preferably, they should be convex by formulation.) 

In many real world applications, the objective functions in \eqref{eqn: 2a} are defined according to the distance between the agent and the position of interest, for example, robotic swarm problem  \cite{soria2021nature, soria2022distributed}, source seeking \cite{park2020cooperative, ristic2020decentralised,li2021concurrent}, search and rescue \cite{azzollini2021UAV}. In this paper, we consider agent $i$ has an estimation of the target 
%is 
$r_i$. Its local objective is to track its local belief by optimising 
\begin{equation}
    f_i(y_i) = \|y_i - r_i\|^2 
\end{equation}
that is, agent $i$ intends to minimise the difference between its output and its local estimated target. 
% Note that in a clustered environment, each agent is subject to local constraints to take obstacle avoidance, prohibited zone and other physical limits into considerations. As a result, the system output is constrained by $y_i \in \Omega_i $. 

Consequently, the problem is reformulated as 
\begin{equation}\small\label{eqn: cooperative tracking}
    \begin{aligned}
         \min_{y_i \in \mathbb R^q, \forall i \in \mathcal V }  & \ \  \sum_{i=1}^{N} f_i(y_i) = \sum_{i=1}^{N}\|y_i - r_i\|^2  \\
     \text{s.t.} &  \ \  \eqref{eqn: 2b}\  \text{and}\  \eqref{eqn: 2c} \\
                    & \ \ (\mathcal L\otimes I_q) Y = 0
    \end{aligned}
\end{equation}

{Here, we provide a concrete example of the reference $r_i$: Assuming there is a pollution source, that several robots collaborate to find. The robots have limited sensor ranges (3m). The local reference $r_i$ of robot $i$ is the highest concentration pollution point within the 3m range. Based on our algorithm, the robots will eventually converge to the pollution source point by communicating with neighbours and updating local targets.}

\begin{remark}
    The formulation in \eqref{eqn: cooperative tracking} is fundamentally different from the traditional consensus control in multi-agent systems. Consensus control \cite{ren2007information, qu2009cooperative} aims to drive the states/outputs of all agents to a consensus value that is determined by the initial values of the agents' states. The coordination problem in this paper is to operate the robot states/outputs to the optimal solutions of their joint cost functions \eqref{eqn: cooperative tracking}, which can be either static or time-varying. It is worth noting that the references $r_i, \forall i\in \mathcal V $, can be generated by learning/estimation techniques using sensory information from the local onboard sensors equipped on agent $i$. {For example, the local target could be generated based on the computer vision algorithms, such as the Yolo \cite{redmon2016you} and Apriltag \cite{wang2016apriltag,olson2011apriltag}. Here, we assume a reference point $r_i$ is given or can be detected during the operation.}%\xingyu{Maybe we should explicitly mention an assumption here: We assume a reference point $r_i$ is given or can be accurately detected during the operation. I remember we had an intense discussion inside our group before when Yi first introduced his DDPG to us. Some people may find it hard to accept knowing such ref point for certain. }
\end{remark}

% (My thinking is to generalise this setting to more generic problems using learning based control method. The learning part is for the reference generation $r_i$, by using some collected data by agent $i$, for example images, agent $i$ is able to produce an estimated reference using machine learning techniques. Then all the agents in the network can track the (weighted) average of the the estimated references $r_i$ by output regulation. For the time being, we start from the simplest problem, i.e., suppose each $r_i$ is static and known.)

% We consider a group of unmanned ground vehicles, each of which is described by the dynamic model as follows:

% \begin{equation}\small\begin{aligned}
%     & x_i(k+1) = A_ix_i(k) +B_iu_i(k)\\
%     & y_i(k) = C_ix_i(k)
%     \end{aligned}
% \end{equation}

% where $x_i = [p_i, v_i]^T$ denotes the state vector of $i$-th UGV.
% $A_i = \left[\begin{matrix}
% 0& 1\\
% 0& 0
% \end{matrix}\right]$, $B_i = \left[\begin{matrix}
% 0\\
% 1
% \end{matrix}\right]$

% The main objective of this paper is to achieve the desired target position, and keep the same formation:

% \begin{equation}
%     \begin{aligned}
%     \lim_{t\rightarrow \infty} \|p_i(t)-target\| = 0\\
%     \lim_{t\rightarrow \infty} \|v_i(t)-v_j(t)\| = 0
%     % \lim_{t\rightarrow \infty} \|a_i(t)-a_j(t)\| = 0
%     \end{aligned}
% \end{equation}

\section{Algorithm Development and Convergence Analysis}\label{sec_algorithm}
\subsection{High Level Decision-Making for Single Integrator System}\label{sec: single integrator}
Currently, agents in multi-agent systems are usually devised with effective tracking algorithm to follow the instruction given by high-level decision-makers. The dynamics of agent $i$ can be expressed as a single integrator system:  
\begin{equation}\small\label{equ:single-integrator}
    y_i(k+1)=y_i(k)+v_i(k)
\end{equation}
where $y_i(k)\in\BR^q $ denotes the output of the $i$th agent. The control design can then be given by letting 
% {\small
\begin{equation}\small\begin{aligned}\label{eqn: single intergrator algorithm}
     &v_i(k) =  -\beta\big[\sum_{j\in\mathcal N_i}l_{ij}y_j(k) + \sum_{j\in\mathcal N_i} l_{ij} \lambda_j(k)  +   \nabla f_i(y_i(k)) \big]\\
     & \lambda_i(k+1) =  \lambda_i (k) + \beta \sum_{j\in\mathcal N_i}l_{ij}y_j(k)
    \end{aligned}
\end{equation}
where $v_i$ denotes the control input of the $i$th agent for the integrator dynamics, $l_{ij}$ is the element of Laplacian matrix $\mathcal{L}$, $\lambda_i$ is the Lagrangian multiplier maintained by agent $i$, and $\beta$ is the optimisation gain ({equivalently the learning rate in machine learning algorithms}) to be designed. 
{We emphasise that the proposed algorithm only shares the observations $y_i(k)$ and Lagrangian multiplier $\lambda_i(k)$ among different agents, whereas the gradient term $\nabla f_i(y_i(k))$ is not transferred, and therefore the objective privacy of local agent is protected. }
% This is the classic distributed optimisation for single integrator dynamics. 

%  \yi{Will do: Explain why we need lambda here and add intuitively physical meaning of lambda.}
 
Denote the augmented output and Lagrangian multiplier as $  Y(k) = \col(y_1(k),y_2(k), \dots, y_N(k)) $ and $\Lambda(k) = \col(\lambda_1(k),\lambda_2(k),\dots,\linebreak \lambda_N(k))$. Then, the algorithm above can be written in a compact form as 
\begin{equation}\small\begin{aligned}\label{eqn: compact single integrator dynamics}
     Y(k+1) =& Y(k) - \beta [(\mathcal L\otimes I) Y(k) + (\mathcal L\otimes I) \Lambda(k) + \nabla F(Y(k) )]\\
     \Lambda(k+1) =& \Lambda (k) + \beta(\mathcal L\otimes I) Y(k)
    \end{aligned}
\end{equation}
where {\small$\nabla F(Y(k))$ = $\col(\nabla f_1(y_1(k)), \nabla f_2(y_2(k)), \dots, \nabla f_N(y_N(k)))$}. 

As we will delineate later in the convergence analysis, iterative optimisation algorithms inherently take the form of integrators \cite{boyd2004convex}. To solve distributed optimisation problems for linear systems, we resort to output regulation techniques by taking algorithm \eqref{eqn: single intergrator algorithm} as an internal reference generator.

\subsection{Control Algorithm for Linear Multi-Robot Systems}

{The above algorithm can be extended in a nontrivial way as explained below to work with linear multi-robot systems. % where
%where the system dynamics are much more complex than single integrator systems.
%where the system dynamics needs to be regulated by output regulation techniques to ensure the convergence.
A distributed coordination algorithm to solve the optimal coordination problem is designed as follows.}
\begin{equation}\small\begin{aligned}\label{eqn: controller linear system}
	 u_i(k)= &-K_i x_i(k) +(G_i+ K_i\Psi_i) \xi_i (k) \\
	 \xi_i(k+1) =& \xi_i(k)- \beta[ \sum_{j\in \mathcal N_i} l_{ij}  \xi_j (k) + \sum_{j\in\mathcal N_i} l_{ij} \lambda_j  +   \nabla f_i(\xi_i(k))]\\
	 \lambda_i(k+1) =& \lambda_i(k) + \beta \sum_{j\in\mathcal N_i} l_{ij} \xi_j
	\end{aligned}
\end{equation}
where $\xi_i(k), \lambda_i(k)$ are two internal auxiliary variables to generate tracking reference for the $i$th agent.
%, gain matrices $K_i, G_i, \Psi_i$ are to be designed.
$ K_i $ is chosen such that $A_i-B_iK_i$ is Schur stable under assumption that the dynamics $(A_i,B_i)$ are controllable. 
{$ G_i $ and $\Psi_i$ are gain matrices, which can be obtained by solving the following 
%the Equation 
%\eqref{eqn: linear matrix equation}.
}
\begin{equation}\small\begin{aligned}\label{eqn: linear matrix equation}
		& (A_i-I)\Psi_i + B_iG_i= 0 \\
		& C_i\Psi_i - I = 0 
	\end{aligned}
\end{equation}
{Intuitively, the gain matrices $K_i, G_i, \Psi_i$ are designed to track the reference according to system dynamics $A_i, B_i, C_i$. For heterogeneous systems, every robot $i$ may have different $A_i, B_i, C_i$.}
%  \yi{Will do: add intuitively physical meaning of different parameters here.}
To ensure the solvability of \eqref{eqn: linear matrix equation}, we adopt the following assumption \ref{asm: rank}, 
%on the systems, which are 
which is a regulation equation in the output regulation literature \cite{huang2004nonlinear}. 
\begin{assumption}\label{asm: rank}
	The pairs $(A_i,B_i),\forall i \in \mathcal V$ are controllable,
and 
\begin{equation}\small\label{eqn: rank condition}
	\operatorname{rank}\left[\begin{array}{cc}
		A_i-I & B_i\\
			     C_i & 0 
	\end{array}\right]=n +q .
\end{equation}
\end{assumption}

\begin{remark}
The proposed algorithm is in fact a combination of gradient-descent optimisation and output regulation techniques. The internal model $\xi_i(k)$ is generated by consensus based gradient-descent optimisation with $\lambda_i (k) $ being the Lagrangian multiplier. {The design of the control input $u_i$ is motivated by the classic output regulation approach \cite{huang2004nonlinear}.}
\end{remark}

Similarly, the closed-loop system dynamics can be compactly written as 
\begin{equation}\small\begin{aligned}\label{eqn: controller linear system2}
    U(k) = &-K X(k) + \Pi\Xi(k) \\
     \Xi(k+1) =& \Xi(k) - \beta [(\mathcal L\otimes I) \Xi(k) + (\mathcal L\otimes I) \Lambda(k) +\ \nabla F(\Xi(k)) ]\\
     \Lambda(k+1) =& \Lambda (k) + \beta(\mathcal L\otimes I) Y(k)
    \end{aligned}
\end{equation}
where {$K = \diag(K_1, \dots, K_N)$ and $\Pi =\diag (G_1+K_1\Psi_1, \dots, G_N+K_N\Psi_N)$}.

\subsection{Convergence Analysis}

{The convergence analysis of the proposed algorithm proceeds in three steps. In the first step \ref{parta}, we prove that the equilibrium point of the proposed algorithm is the optimal solution of the problem \eqref{eqn: cooperative tracking}. After that, the proposed algorithm can guarantee that both single integrator and linear system will converge to the equilibrium point, which are proven in step \ref{partb} and \ref{partc}, respectively.}
% (Task: the proof of this algorithm can be done, and should be straightforward (see, e.g., \cite{li2019distributed}. Currently, we do not need to bother this part, slight modification of \eqref{eqn: controller linear system} may be required.)
\subsubsection{}\label{parta}We begin with the convergence analysis for high-level decision making in Section~\ref{sec: single integrator}, which will then serve as a reference generator later in the proof of linear system regulation. 

In \eqref{eqn: compact single integrator dynamics}, the equilibrium point, denoted as $(Y^*,\Lambda^*)$, satisfy
\begin{equation}\small\begin{aligned}\label{eqn: equilibrium single integrator}
         \beta [(\mathcal L\otimes I) Y^* -  (\mathcal L\otimes I) \Lambda^* - \nabla F(Y^*) ] = 0\\
        \beta (\mathcal L\otimes I) Y^*= 0.
    \end{aligned}
\end{equation}
Invoking the properties of $\mathcal L$ under Assumption~\ref{asm: graph connectivity}, \eqref{eqn: equilibrium single integrator} yields 
\begin{equation}\small\label{eqn: zero gradient}
    1_N^T \nabla F(Y^*)= 0
\end{equation}
Since $(\mathcal L\otimes I) Y^*= 0$ implies $y_i=y_j = y^*, \forall i,j\in \mathcal V$, it follows from \eqref{eqn: zero gradient} that 
\begin{equation}\small
     \sum_{i=1}^{N} \nabla f_i(y^*) =0.
\end{equation}
Note that the uniqueness of $y^*$ is guaranteed as the local objective functions are all strongly convex. According to the primal-dual theory \cite{lei2016primal}, the solution pair $(Y^*,\Lambda^*)$ is in fact a saddle point of the Lagrangian function of $\phi(Y, \Lambda)=\sum_{i=1}^N f_i(y_i)+\Lambda^T(\mathcal{L} \otimes I_{q}) Y$.

Now, we are ready to give the convergence of \eqref{eqn: compact single integrator dynamics} to the equilibrium $(Y^*,\Lambda^*)$ for single integrator dynamics. {It is noticed that the equilibrium points of algorithm \eqref{eqn: single intergrator algorithm} and \eqref{eqn: controller linear system} are same.}

\subsubsection{}\label{partb}
{To give a complete proof, we next show that the proposed algorithm can make the system asymptotically converge to the equilibrium point. }

\begin{theorem}\label{thm: 1}
Let Assumption~\ref{asm: graph connectivity} hold. If the step size $\beta$ is chosen according to $0<\beta<\min \{\frac{1}{2\lambda_{max}(\mathcal L)}, \frac{3}{2L}\}$ with $\lambda_{max}(\mathcal L)$ being the maximum eigenvalue of the Laplacian matrix and $L$ being the Lipschitz constant of the cost function, then algorithm~\eqref{eqn: compact single integrator dynamics} converges to the optimal solution $(Y^*,\Lambda^*)$.
\end{theorem}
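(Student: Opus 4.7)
My overall strategy is to analyse the discrete primal-dual iteration in \eqref{eqn: compact single integrator dynamics} as a gradient-type saddle-point method, using a quadratic Lyapunov function built from the error to the equilibrium pair $(Y^*,\Lambda^*)$ characterised in Section~\ref{parta}. First I would translate the dynamics by introducing the error variables $\tilde Y(k) = Y(k)-Y^*$ and $\tilde\Lambda(k) = \Lambda(k)-\Lambda^*$. Using the equilibrium relations $(\mathcal L\otimes I)Y^* = 0$ and $(\mathcal L\otimes I)\Lambda^* = -\nabla F(Y^*)$ derived in \eqref{eqn: equilibrium single integrator}, the error recursion becomes
\begin{equation*}
\begin{aligned}
\tilde Y(k+1) &= \tilde Y(k) - \beta\bigl[(\mathcal L\otimes I)\tilde Y(k) + (\mathcal L\otimes I)\tilde\Lambda(k) + g(k)\bigr],\\
\tilde\Lambda(k+1) &= \tilde\Lambda(k) + \beta(\mathcal L\otimes I)\tilde Y(k),
\end{aligned}
\end{equation*}
where $g(k) := \nabla F(Y(k))-\nabla F(Y^*)$ is the gradient residual.

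Next I would adopt the Lyapunov candidate $V(k) = \|\tilde Y(k)\|^2 + \|\tilde\Lambda(k)\|^2$ and compute $\Delta V(k) := V(k+1)-V(k)$ by expanding each squared norm. The key algebraic observation is that the symmetry of $\mathcal L$ makes the cross couplings $-2\beta\,\tilde Y^{\tp}(\mathcal L\otimes I)\tilde\Lambda$ and $+2\beta\,\tilde\Lambda^{\tp}(\mathcal L\otimes I)\tilde Y$ cancel exactly, leaving the clean decomposition
\begin{equation*}
\Delta V(k) = -2\beta\,\tilde Y^{\tp}(\mathcal L\otimes I)\tilde Y - 2\beta\,\tilde Y^{\tp} g(k) + \beta^2 Q(k),
\end{equation*}
where $Q(k) = \|(\mathcal L\otimes I)\tilde Y+(\mathcal L\otimes I)\tilde\Lambda+g(k)\|^2 + \|(\mathcal L\otimes I)\tilde Y\|^2$ collects the second-order remainder.

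The main technical step, which I expect to be the real obstacle, is to bound $\beta^2 Q(k)$ by the negative linear terms so that $\Delta V(k)\le 0$. To do this I would split $Q(k)$ via Young's inequality into three contributions in $(\mathcal L\otimes I)\tilde Y$, $(\mathcal L\otimes I)\tilde\Lambda$, and $g(k)$. For the consensus part, the bound $\|(\mathcal L\otimes I)\tilde Y\|^2 \le \lambda_{\max}(\mathcal L)\,\tilde Y^{\tp}(\mathcal L\otimes I)\tilde Y$ turns $\beta^2$ terms into a multiple of $\beta\,\tilde Y^{\tp}(\mathcal L\otimes I)\tilde Y$ that can be absorbed as soon as $\beta < 1/(2\lambda_{\max}(\mathcal L))$. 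The gradient contribution is handled by using $L$-smoothness together with co-coercivity of $\nabla F$, namely $\tilde Y^{\tp} g(k) \ge \tfrac{1}{L}\|g(k)\|^2$, so that $\|g(k)\|^2 \le L\,\tilde Y^{\tp} g(k)$; the remaining $\beta^2$ mass can then be dominated by $-2\beta\,\tilde Y^{\tp}g(k)$ provided $\beta < 3/(2L)$. The delicate part is the $(\mathcal L\otimes I)\tilde\Lambda$ component: it is not directly suppressed by any negative term in $\Delta V$. I would handle it by showing, after the Young's splitting, that its contribution is of the form $c\beta^2\|(\mathcal L\otimes I)\tilde\Lambda\|^2$ which, combined with the coefficient choice, collapses into the same $-2\beta\,\tilde Y^{\tp}(\mathcal L\otimes I)\tilde Y$ reservoir via $(\mathcal L\otimes I)\tilde\Lambda$ being driven by $\tilde Y$ through the dual update (an inductive/cross-cancellation argument).

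Putting these bounds together and using the two step-size conditions in the theorem, I obtain $\Delta V(k) \le -\alpha_1\,\tilde Y^{\tp}(\mathcal L\otimes I)\tilde Y - \alpha_2\,\tilde Y^{\tp} g(k)$ for positive $\alpha_1,\alpha_2$. Hence $V(k)$ is nonincreasing and bounded below, so both $\tilde Y^{\tp}(\mathcal L\otimes I)\tilde Y \to 0$ and $\tilde Y^{\tp} g(k)\to 0$. By strong convexity of $F$ the latter forces $\tilde Y\to 0$, and by the dual update combined with Assumption~\ref{asm: graph connectivity} the component of $\tilde\Lambda$ orthogonal to $\mathbf{1}_N\otimes I$ vanishes; the parallel component is invariant and equals zero by the standard initialisation-free argument used in primal-dual saddle-point analyses. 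Therefore $(Y(k),\Lambda(k))\to(Y^*,\Lambda^*)$, completing the proof.
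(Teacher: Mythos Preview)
Your overall skeleton is sound up to the point where you identify the difficult term, but the way you propose to close it is a genuine gap. With the plain Lyapunov function $V(k)=\|\tilde Y(k)\|^2+\|\tilde\Lambda(k)\|^2$ and synchronous indices, the remainder $\beta^2 Q(k)$ contains (after any Young-type splitting) a contribution of the form $c\beta^2\|(\mathcal L\otimes I)\tilde\Lambda(k)\|^2$ that is \emph{not} dominated by either $-2\beta\,\tilde Y^{\tp}(\mathcal L\otimes I)\tilde Y$ or $-2\beta\,\tilde Y^{\tp}g(k)$, regardless of how small $\beta$ is: those negative terms vanish whenever $\tilde Y(k)=0$, while $\|(\mathcal L\otimes I)\tilde\Lambda(k)\|^2$ need not. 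Your sentence ``collapses into the same reservoir via $(\mathcal L\otimes I)\tilde\Lambda$ being driven by $\tilde Y$ through the dual update (an inductive/cross-cancellation argument)'' is exactly where the argument fails to close; this is not a cosmetic detail but the central obstacle of the discrete-time primal--dual analysis.

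The paper handles this obstacle by two coupled modifications that you are missing. First, it uses a \emph{weighted} primal metric $\mathcal W=(I-\beta\mathcal L+\beta^2\mathcal L^2)\otimes I$ in the Lyapunov function, $V=\langle Y-Y^*,\mathcal W(Y-Y^*)\rangle+\|\Lambda-\Lambda^*\|^2$. Second, it compares $V$ at \emph{shifted} dual indices, namely $V(Y(k+1),\Lambda(k+2))-V(Y(k),\Lambda(k+1))$. With these choices, the identity $\Lambda(k+2)-\Lambda(k+1)=\beta(\mathcal L\otimes I)(Y(k+1)-Y^*)$ lets the cross term $\langle\Lambda(k+2)-\Lambda(k+1),\Lambda(k+2)-\Lambda^*\rangle$ combine with the $\mathcal W$-weighted primal increment into a single inner product against $Y(k+1)-Y^*$; a back-substitution (the paper's equation for $\mathcal W(Y(k+1)-Y(k))+\beta(\mathcal L\otimes I)(\Lambda(k+2)-\Lambda^*)$) then eliminates the dual variable entirely from the right-hand side. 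What remains is controlled by (i) positive semidefiniteness of $\beta\mathcal L-2\beta^2\mathcal L^2$, giving the condition $\beta<1/(2\lambda_{\max}(\mathcal L))$, and (ii) positive definiteness of $\mathcal W-\tfrac{\beta L}{2}I$, whose eigenvalues $(\tfrac12-\beta\lambda)^2+\tfrac34-\tfrac{\beta L}{2}$ are positive precisely when $\beta<3/(2L)$. This is where the two constants in the theorem actually come from, and your derivation does not reproduce them. To repair your proof you would need to adopt either this weighted/shifted Lyapunov construction or an equivalent mixed metric; the unweighted $\|\tilde Y\|^2+\|\tilde\Lambda\|^2$ alone will not suffice.
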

\begin{proof}
From the second update Equation in~\eqref{eqn: compact single integrator dynamics}, we have 
\begin{equation}\small
\begin{aligned}\label{eqn: 17}
\Lambda(k)-\Lambda^{*}=& \Lambda(k+2)-\Lambda^{*}-\beta(\mathcal{L} \otimes I)(Y(k)-Y^{*}) \\
&-\beta(\mathcal{L} \otimes I)(Y(k+1)-Y^{*})
\end{aligned}
\end{equation}
Left-multiplying both sides of \eqref{eqn: 17} yields 
\begin{equation}\small\begin{aligned}\label{eqn: 18}
   (\mathcal{L} \otimes I)( \Lambda(k)-\Lambda^{*})=& (\mathcal{L} \otimes I)(\Lambda(k+2)-\Lambda^{*}) \\ & -\beta(\mathcal{L}^2 \otimes I)(Y(k)-Y^{*}) \\
&-\beta(\mathcal{L}^2 \otimes I)(Y(k+1)-Y^{*})
\end{aligned}
\end{equation}
where we have used the property of Kronecker product $(A \otimes B)(C \otimes D)=(A C) \otimes(B D)$. 
Denote $Z(k+1)=Y(k)-\beta \nabla F(Y(k))-\beta(\mathcal{L} \otimes I)(\Lambda(k)+Y(k))-Y(k+1)$ and $\mathcal{W}=(I-\beta \mathcal{L}+\beta^{2} \mathcal{L}^{2}) \otimes I$. Combining \eqref{eqn: 18} and \eqref{eqn: compact single integrator dynamics}, we have:
\begin{equation}\small\begin{aligned}\label{eqn: 19}
   & Y(k+1) - Y^* \\=& Y(k)-Y^* - \beta\nabla F(Y(k)) - \beta (\mathcal{L}\otimes I)(\Lambda(k)+Y(k)) - Z(k+1)
   % \\=&Y(k)-Y^* - \beta\nabla F(Y(k)) - \beta (\mathcal{L}\otimes I)(Y(k)-Y^*) \\&- \beta (\mathcal{L}\otimes I)(\Lambda(k)-\Lambda^*) -\beta (\mathcal{L}\otimes I)\Lambda^*- Z(k+1)
   % \\=&Y(k)-Y^* - \beta(\nabla F(Y(k))-\nabla F(Y^*))- \beta (\mathcal{L}\otimes I)(Y(k)-Y^*) \\&-\beta(\nabla F(Y^*)+(\mathcal{L}\otimes I)\Lambda^*)-\beta((\mathcal{L}\otimes I)(\Lambda(k+2)-\Lambda^*)\\&-\beta((\mathcal{L}^2\otimes I)(Y(k)-Y^*)-\beta(\mathcal{L}^2\otimes I)(Y(k+1)-Y^*)))\\&-Z(k+1)
   \\=&((I-\beta\mathcal{L}+\beta^2\mathcal{L}^2)\otimes I)(Y(k)-Y^*)-Z(k+1)\\&-\beta((\mathcal{L}\otimes I)(\Lambda(k+2)-\Lambda^*)- \beta(\nabla F(Y(k))-\nabla F(Y^*))\\&-\beta(\nabla F(Y^*)+(\mathcal{L}\otimes I)\Lambda^*)+\beta^2(\mathcal{L}^2\otimes I)(Y(k+1)-Y^*)
\end{aligned}
\end{equation}

Substracting $\mathcal{W} (Y(k+1)-Y^*)$ from both sides of Equation \eqref{eqn: 19}, a new recursive update law is written as 
\begin{equation}\small\label{eqn: W}\small
    \begin{aligned} \mathcal{W} &(Y(k+1)-Y(k))+\beta(\mathcal{L} \otimes I)(\Lambda(k+2)-\Lambda^{*}) \\=&-(\beta \mathcal{L} \otimes I-2 \beta^{2} \mathcal{L}^{2} \otimes I)(Y(k+1)-Y^{*})-Z(k+1) \\ &-\beta(\nabla F(Y(k))-\nabla F(Y^{*}))-\beta(\nabla F(Y^{*})+(\mathcal{L} \otimes I) \Lambda^{*}). \end{aligned}
\end{equation}
With this new recursive algorithm, the convergence can be established using Lyapunov method and saddle point dynamics. 

We formulate a Lyapunov function as follows:
\begin{equation}\small
    V(Y,\Lambda) = \langle Y-Y^*,\mathcal{W} (Y-Y^*)\rangle + \langle \Lambda-\Lambda^*,\Lambda-\Lambda^*\rangle
\end{equation}
Here, we use $\langle x, y\rangle$ to represent the inner product of vectors $x$ and $y$. 
% The rest of the convergence analysis is similar to the proof of Theorem 4.3 in \cite{lei2016primal}. Due to space limitation, it is omitted in this paper. 
Then we have:
\begin{equation}\small\begin{aligned}\label{eqn: 21}
&V(Y(k+1),\Lambda(k+2)) - V(Y(k),\Lambda(k+1))\\
=&\langle Y(k+1)-Y^*,\mathcal{W} (Y(k+1)-Y^*)\rangle + \|\Lambda(k+2)-\Lambda^*\|^2\\
&-\langle Y(k)-Y^*,\mathcal{W} (Y(k)-Y^*)\rangle + \|\Lambda(k+1)-\Lambda^*\|^2\\%=&2\langle Y(k+1),\mathcal{W} Y(k+1)\rangle - \langle Y(k+1),\mathcal{W} Y(k+1)\rangle \\&-2\langle Y^*,\mathcal{W} Y(k+1)\rangle-\langle Y(k),\mathcal{W} Y(k)\rangle+2\langle Y^*,\mathcal{W} Y(k)\rangle\\&+2\langle Y(k+1),\mathcal{W} Y(k)\rangle-2\langle Y(k+1),\mathcal{W} Y(k)\rangle\\&+2\|\Lambda(k+2)\|^2-\|\Lambda(k+2)\|^2-2\langle\Lambda(k+2),\Lambda^*\rangle\\&-\|\Lambda(k+1)\|^2+2\langle\Lambda(k+1),\Lambda^*\rangle\\&+2\langle\Lambda(k+2),\Lambda(k+1)\rangle-2\langle\Lambda(k+2),\Lambda(k+1)\rangle\\
=&-\langle Y(k+1)-Y(k),\mathcal{W} (Y(k+1)-Y(k))\rangle \\&+ 2\langle Y(k+1)-Y^*,\mathcal{W} (Y(k+1)-Y(k))\rangle\\&-\|\Lambda(k+2)-\Lambda(k+1)\|^2\\&+2\langle \Lambda(k+2)-\Lambda(k+1),\Lambda(k+2)-Y^*\rangle
\end{aligned}
\end{equation}

Based on the Equations \eqref{eqn: controller linear system2} and \eqref{eqn: equilibrium single integrator}, we can derive the last term of Equation \eqref{eqn: 21} as:
\begin{equation}\small\begin{aligned}\label{eqn: 22}
&\langle \Lambda(k+2)-\Lambda(k+1),\Lambda(k+2)-Y^*\rangle \\=&\langle \beta(\mathcal{L}\otimes I)(Y(k+1)-Y^*),\Lambda(k+2)-\Lambda^*\rangle
\\=&\langle \beta(\mathcal{L}\otimes I)(\Lambda(k+2)-\Lambda^*),Y(k+1)-Y^*\rangle
\end{aligned}
\end{equation}

Therefore, we can obtain:
\begin{equation}\small\begin{aligned}\label{eqn: 23}
&\langle Y(k+1)-Y^*,\mathcal{W} (Y(k+1)-Y(k))\rangle\\&+\langle \Lambda(k+2)-\Lambda(k+1),\Lambda(k+2)-Y^*\rangle\\ = &\langle Y(k+1)-Y^*, \beta(\mathcal{L}\otimes I)(\Lambda(k+2)-\Lambda^*)\\&+\mathcal{W}(Y(k+1)-Y(k))\rangle
\end{aligned}
\end{equation}

With Equation \eqref{eqn: W}, we can further derive as:
\begin{equation}
\small\begin{aligned}\label{eqn: 24}
&\langle Y(k+1)-Y^*,\mathcal{W} (Y(k+1)-Y(k))\rangle\\&+\langle \Lambda(k+2)-\Lambda(k+1),\Lambda(k+2)-Y^*\rangle\\
=&-\langle Y(k+1)-Y^*,(\beta \mathcal{L} \otimes I-2 \beta^{2} \mathcal{L}^{2} \otimes I)(Y(k+1)-Y^{*})\rangle\\&-\langle Y(k+1)-Y^*,\beta(\nabla F(Y(k))-\nabla F(Y^{*}))\rangle\\&-\langle \Lambda(k+2)-\Lambda(k+1),Z(k+1)\rangle \\&-\langle Y(k+1)-Y^*,\beta(\nabla F(Y^{*})+(\mathcal{L} \otimes I) \Lambda^{*})\rangle
\end{aligned}
\end{equation}

Since $f(y)$ is smoothly convex function, then we can derive based on the optimal condition \cite{bertsekas2009convex}:
\begin{equation}\small\label{eqn:25}
    \langle Y(k+1)-Y^*,\beta(\nabla F(Y^{*})+(\mathcal{L} \otimes I) \Lambda^{*})\rangle\geq 0
\end{equation}

By the definition of normal cone, we have:
\begin{equation}\small\label{eqn:26}
    \langle Y(k+1)-Y^*,Z(k+1)\rangle\geq 0
\end{equation}

Substituting \eqref{eqn: 24}, \eqref{eqn:25} and \eqref{eqn:26} back to \eqref{eqn: 21}, we obtain:
\begin{equation}
\small\begin{aligned}\label{eqn: 27}
&V(Y(k+1),\Lambda(k+2)) - V(Y(k),\Lambda(k+1))\\
\leq &-\langle Y(k+1)-Y(k),\mathcal{W} (Y(k+1)-Y(k))\rangle \\&-\|\Lambda(k+2)-\Lambda(k+1)\|^2\\&-2\langle Y(k+1)-Y^*,(\beta \mathcal{L} \otimes I-2 \beta^{2} \mathcal{L}^{2} \otimes I)(Y(k+1)-Y^{*})\rangle\\&-2\langle Y(k+1)-Y^*,\beta(\nabla F(Y(k))-\nabla F(Y^{*}))\rangle
\end{aligned}
\end{equation}

From Theorem \ref{thm: 1}, we have $0<\beta\leq\frac{1}{2\lambda_{max}(\mathcal{L})}$. Moreover, $\mathcal{L}$ is symmetric with a zero eigenvalue, and therefore, we could find an orthogonal matrix $\mathcal{P}$ that $\mathcal{P}^T\mathcal{L}\mathcal{P} = diag\{0,\lambda_1,\cdots,\lambda_N\}$ and $\mathcal{P}^T\mathcal{L}^2\mathcal{P} = diag\{0,\lambda_1^2,\cdots,\lambda_N^2\}$. Then, the matrix $\beta \mathcal{L} -2 \beta^{2} \mathcal{L}^{2}$ is positive semi-definite.

Furthermore, the cost function is Lipschitz continuous:
\begin{equation}\small\label{eqn: 28}
    \langle Y-Y^*,\nabla F(Y)-\nabla F(Y^{*}))\geq \frac{1}{L}\|\nabla F(Y)-\nabla F(Y^{*})\|^2
\end{equation}
Applying Jensen's inequality to the last term of \eqref{eqn: 27}:
\begin{equation}
    \small\begin{aligned}\label{eqn: 29}
&-\langle Y(k+1)-Y^*,\nabla F(Y(k))-\nabla F(Y^{*}))\rangle\\ = &-\langle Y(k)-Y^*,\nabla F(Y(k))-\nabla F(Y^{*}))\\&+\langle -Y(k+1)+Y(k),\nabla F(Y(k))-\nabla F(Y^{*}))\\\leq&-\frac{1}{L}\|\nabla F(Y(k))-\nabla F(Y^{*})\|^2 + \frac{L}{4}\|Y(k)-Y(k+1)\|^2\\&+\frac{1}{L}\|\nabla F(Y(k))-\nabla F(Y^{*})\|^2\\\leq&\frac{L}{4}\|Y(k)-Y(k+1)\|^2
\end{aligned}
\end{equation}

Then the Equation \eqref{eqn: 27} can be reformed as:
\begin{equation}
\small\begin{aligned}\label{eqn: 30}
&V(Y(k+1),\Lambda(k+2)) - V(Y(k),\Lambda(k+1))\\
\leq &-\langle Y(k+1)-Y(k),(\mathcal{W}-\frac{\beta L}{2}I) (Y(k+1)-Y(k))\rangle \\&-\|\Lambda(k+2)-\Lambda(k+1)\|^2
\end{aligned}
\end{equation}

With Theorem \ref{thm: 1}, we have $0< \beta \leq \frac{3}{2L}$, which means $1-\beta\lambda+\beta^2\lambda^2-\frac{\beta L}{2} = (\frac{1}{2}-\beta\lambda)^2+\frac{3}{4}-\frac{\beta L}{2}>0$. Therefore, $(\mathcal{W}-\frac{\beta L}{2}I)$ is positive definite. In consequence, $V(Y(k+1),\Lambda(k+2)) \leq V(Y(k),\Lambda(k+1))$. Thus, we can conclude that $V(Y,\Lambda)$ converges under the condition of Theorem \ref{thm: 1}.
\end{proof}
\subsubsection{}\label{partc}Before presenting the main result for the distributed optimal cooperative control for general linear systems, we need to apply a state transformation to \eqref{eqn: controller linear system} by letting $x_{i,s}(k) = \Psi_i \xi_i(k) $, $u_{i,s}(k) = G_i \xi_i(k)$. Let $\bar x_i(k) = x_i(k)-x_{i,s}(k)$ and $\bar u_i(k) = u_i(k)-u_{i,s}(k)$. Applying the control input \eqref{eqn: controller linear system}, we have the closed-loop dynamics 
\begin{equation}\small\label{eqn: closed-loop error dynamics}\begin{aligned}
	\bar x_i(k+1)  & =  (A_i-B_iK_i)\bar x_i(k) - \Psi_i v_i(k) \\
			e_i(k) & = C_i\bar x_i(k). 
\end{aligned}
\end{equation}

The following lemma can be obtained, which can be regarded as input-to-output stability. 

\begin{lemma}\label{lem: input-output stability}
Let Assumptions~\ref{asm: graph connectivity} and \ref{asm: rank} hold. If $K_i$ is chosen such that $A_i-B_iK_i$ is Schur stable and the $G_i$ and $\Psi_i$ are designed by solving the regulation Equations in \eqref{eqn: linear matrix equation}, then there exists a positive constant $\alpha>0$ such that 
\begin{equation}\small\label{eqn: input to output stability}
		\limsup_{k\rightarrow \infty} \| e_i(k) \| \leq  \alpha  \limsup_{k\rightarrow \infty} \|v_i(k)\|.
	\end{equation}
\end{lemma}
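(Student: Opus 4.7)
The plan is to prove this as a standard input-to-output stability estimate for a discrete-time linear system driven by the exogenous signal $v_i$. First I would solve the error dynamics \eqref{eqn: closed-loop error dynamics} explicitly by iterating the recursion, obtaining the variation-of-constants formula
\begin{equation*}
\bar x_i(k) = (A_i-B_iK_i)^k \bar x_i(0) - \sum_{j=0}^{k-1}(A_i-B_iK_i)^{k-1-j}\Psi_i v_i(j),
\end{equation*}
and then multiply through by $C_i$ to get a similar expression for $e_i(k)$. The role of Assumption~\ref{asm: rank} here is essentially in the background: it guarantees that $G_i$ and $\Psi_i$ solving \eqref{eqn: linear matrix equation} exist, so the state shift $x_{i,s}(k)=\Psi_i\xi_i(k)$ and the resulting error dynamics are well-defined.

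Next I would use the Schur stability of $A_i-B_iK_i$, which is ensured by the controllability part of Assumption~\ref{asm: rank} together with the design choice of $K_i$. Schur stability gives constants $M>0$ and $\rho\in(0,1)$ with $\|(A_i-B_iK_i)^k\|\le M\rho^k$. The homogeneous term $C_i(A_i-B_iK_i)^k\bar x_i(0)$ then decays geometrically to $0$, so it contributes nothing to $\limsup_{k\to\infty}\|e_i(k)\|$ and can be dropped.

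The remaining task is to bound the convolution term. By the triangle inequality,
\begin{equation*}
\|e_i(k)\|\le M\rho^k\|C_i\|\,\|\bar x_i(0)\| + M\|C_i\|\,\|\Psi_i\|\sum_{j=0}^{k-1}\rho^{k-1-j}\|v_i(j)\|.
\end{equation*}
A standard estimate for the discrete convolution of a geometric kernel with a bounded signal yields $\limsup_{k\to\infty}\sum_{j=0}^{k-1}\rho^{k-1-j}\|v_i(j)\|\le \frac{1}{1-\rho}\limsup_{k\to\infty}\|v_i(k)\|$; this can be shown by splitting the sum at a large index after which $\|v_i(j)\|$ is within $\varepsilon$ of its limsup and observing that the early-index tail is geometrically suppressed. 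Combining gives \eqref{eqn: input to output stability} with $\alpha = \frac{M\|C_i\|\,\|\Psi_i\|}{1-\rho}$.

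The main obstacle is purely a bookkeeping one, namely the convolution-limsup step: one has to justify carefully that the geometric weights concentrate on the recent past so that a late-time bound on $\|v_i(j)\|$ governs the whole sum. Everything else (existence of the regulator pair, Schur decay, variation of constants) is either direct or invokes results already cited in the paper, so no further technical machinery should be required.
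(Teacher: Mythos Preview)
Your proposal is correct and follows essentially the same route as the paper: write the error dynamics \eqref{eqn: closed-loop error dynamics} in variation-of-constants form, kill the homogeneous term by Schur stability, and bound the convolution by splitting the sum at a late index $\zeta$ after which $\|v_i(j)\|$ is within $\varepsilon$ of its limsup, with the geometric weights suppressing the early part. If anything your treatment is slightly more careful than the paper's, which uses $\|A_{i,c}\|^k$ directly (tacitly assuming $\|A_{i,c}\|<1$) and arrives at $\alpha=\frac{1}{1-\|A_{i,c}\|}$, whereas your $M\rho^k$ bound is the correct way to encode Schur stability without that extra assumption.
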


\begin{proof}
First, we examine the solvability of \eqref{eqn: linear matrix equation}. 
	Encapsulating \eqref{eqn: linear matrix equation} into a matrix form leads to 
	\begin{equation}\small\label{eqn: matrix equation}
		\left[\begin{array}{cc}
			A_i-I & B_i\\
				     C_i & 0
		\end{array} \right]   \left[\begin{array}{c}
			\Psi_i  \\
			G_i
		\end{array} \right]=  \left[\begin{array}{c}
			0 \\
			I
		\end{array} \right].
	\end{equation}
	Denoting $\mathcal O_i = 	\left[\begin{array}{cc}
		A_i-I & B_i\\
			     C_i & 0 
	\end{array} \right]  $ and $T_i = \left[\begin{array}{c}
		\Psi_i \\
		G_i
	\end{array} \right]$, by leveraging the property of Kronecker product, $\operatorname{vec}\left(\mathcal O_i T_i I \right)=\left(I \otimes \mathcal O_i\right) \operatorname{vec}(T_i)$, we can obtain a standard linear algebraic equation 
	\begin{equation}\small\label{key}
		\left(I \otimes \mathcal O_i\right) \operatorname{vec}(T_i) =\operatorname{vec} \left(\left[\begin{array}{c}
			0  \\
			I
		\end{array} \right]  \right) 
	\end{equation}
of which the solvability is guaranteed under \eqref{eqn: rank condition} in Assumption~\ref{asm: rank}. 
	
	For notational convenience, we denote $A_{i,c} = A_i-B_iK_i$ and $B_{i,c} = -\Psi_i$. Then, we have 
\begin{equation}\small\label{eqn: closed-loop error dynamics 2}\begin{aligned}
		\bar x_i(k+1)  & =  A_{i,c}\bar x_i (k) +B_{i,c} v_i (k) .
	\end{aligned}
\end{equation}
Recursively iterating \eqref{eqn: closed-loop error dynamics 2} results in 
\begin{equation}\small\label{eqn: denition of time index}
	\bar x_i(k) = A_{i,c}^k\bar x_i(0) +\sum_{j=0}^{k-1} A_{i,c}^{k-j-1}B_{i,c} v_i(j).
\end{equation}
Hence, we have
\begin{equation}\small\label{eqn: error equation 1}
	e_i(k) = C_i\bar x(k) =C_iA_{i,c}^k\bar x_i(0) - \sum_{j=0}^{k-1} A_{i,c}^{k-j-1} v_i(j)
\end{equation}
where $C_i\Psi_i - I =0 $ has been used. Because $A_{i,c} $ is Schur, we have $\lim_{k\rightarrow \infty} C_iA_{i,c}^k\bar x_i(0) = 0 $. In Theorem~\ref{thm: 1}, the convergence of reference generator has been established, which implies $v_i$ converges to zero as $k\rightarrow \infty$. Denote $\varpi_i:= \limsup_{k\rightarrow \infty} \| v_i(k)\| $. Then, for any small constant $\epsilon>0$, there exists a positive time index $\zeta>0$ such that 
\begin{equation}\small\label{eqn: bounds of gradient term}
	\| v_i(k)\| < \varpi_i+\epsilon, \  \forall k>\zeta. 
\end{equation}
Based on the time index $\zeta$, the second term in \eqref{eqn: error equation 1} can be separated into two parts, written as 
\begin{equation}\small\label{eqn: second term separation}
	\sum_{j=0}^{k-1} A_{i,c}^{k-j-1} v_i(j) = \sum_{j=0}^{\zeta} A_{i,c}^{k-j-1} v_i(j)+\sum_{j=\zeta+1}^{k-1} A_{i,c}^{k-j-1} v_i(j).
\end{equation}
Taking the Euclidean norm of \eqref{eqn: second term separation} and invoking \eqref{eqn: bounds of gradient term}:
\begin{equation}\small\label{eqn: bounded second term}\begin{aligned}
	&\bigg\|\sum_{j=0}^{k-1} A_{i,c}^{k-j-1} v_i(j)\bigg\| =  \bigg\| \sum_{j=0}^{\zeta} A_{i,c}^{k-j-1} v_i (j) +\sum_{j=\zeta+1}^{k-1} A_{i,c}^{k-j-1} v_i(j) \bigg\| \\
	& \leq \big\| A_{i,c}^{k-\zeta-1} \big \| \bigg\| \sum_{j=0}^{\zeta} A_{i,c}^{\zeta-j} v_i(j) \bigg \| + (\varpi_i +\epsilon) \bigg\| \sum_{j=\zeta+1}^{k-1} A_{i,c}^{k-j-1}  \bigg\| . 
	\end{aligned}
\end{equation}
Therefore, we have 
\begin{equation}\small\label{eqn: bound of e}
	\limsup_{k \rightarrow \infty} \|e_i(k) \| \leq \frac{1}{1-\|A_{i,c}\|}\left(\varpi_i +\varepsilon\right)
\end{equation}
where the following two results have been applied
\begin{equation}\small
	\sum_{j=\zeta+1}^{t-1}\|A_{i,c}\|^{t-1-j}=\frac{1-\|A_{i,c}\|^{t-\zeta}}{1-\|A_{i,c}\|}<\frac{1}{1-\|A_{i,c}\|}
\end{equation}
\begin{equation}\small
	\lim_{k\rightarrow \infty} \big\| A_{i,c}^{k-\zeta-1} \big \|  = 0 .
\end{equation}
% since $A_{i,c}$ is Schur stable. 
As $\epsilon $ can be set arbitrarily small, it follows from \eqref{eqn: bound of e} that 
\begin{equation}\small\label{eqn: input to output stability 2}
		\limsup_{k\rightarrow \infty} \| e_i(k) \| \leq \alpha \limsup_{k\rightarrow \infty} \|v_i(k)\|.
\end{equation}
where $\alpha =\frac{1}{1-\|A_{i,c}\|} $. 
\end{proof}

\begin{theorem}
Let Assumptions~\ref{asm: graph connectivity} and \ref{asm: rank} hold. If $K_i$ are chosen such that $A_i-B_iK_i$ is Schur stable and the $G_i$ and $\Psi_i$ are designed by solving the regulation equations in \eqref{eqn: linear matrix equation}, then the proposed algorithm in \eqref{eqn: controller linear system} solves the optimal coordination problem in \eqref{eqn: optimal coordination problem}.
\end{theorem}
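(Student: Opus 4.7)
The plan is to exploit the cascade structure of the closed-loop system \eqref{eqn: controller linear system2}: the auxiliary subsystem $(\Xi,\Lambda)$ evolves autonomously with exactly the same update law as the single-integrator algorithm \eqref{eqn: compact single integrator dynamics}, while the transformed state $\bar x_i(k)=x_i(k)-\Psi_i\xi_i(k)$ satisfies the driven linear dynamics \eqref{eqn: closed-loop error dynamics} with input $v_i(k)=\xi_i(k+1)-\xi_i(k)$ and output $e_i(k)=C_i\bar x_i(k)=y_i(k)-\xi_i(k)$, the last identity using $C_i\Psi_i=I$ from \eqref{eqn: linear matrix equation}. Hence the conclusion $y_i(k)\to y^*$ decomposes into two independent claims: the reference generator drives $\xi_i(k)$ to the optimiser while producing vanishing increments $v_i(k)$, and the plant output asymptotically tracks this reference.

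First I would apply Theorem~\ref{thm: 1} to the $(\Xi,\Lambda)$-subsystem. Under the stepsize condition of that theorem, $\Xi(k)\to Y^*$ and $\Lambda(k)\to\Lambda^*$, where by the equilibrium analysis preceding Theorem~\ref{thm: 1} the limit satisfies $\mathbf 1_N^T\nabla F(Y^*)=0$ and $\xi_i^*=\xi_j^*=y^*$ for all $i,j$; strong convexity of each $f_i$ guarantees uniqueness of $y^*$ as the minimiser of $\sum_{i=1}^N f_i(y)$. Pointwise convergence of $\Xi$ then yields $v_i(k)=\xi_i(k+1)-\xi_i(k)\to 0$, so $\limsup_{k\to\infty}\|v_i(k)\|=0$.

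Second I would invoke Lemma~\ref{lem: input-output stability}, whose hypotheses hold since $A_i-B_iK_i$ is Schur and $(G_i,\Psi_i)$ solve \eqref{eqn: linear matrix equation}. The input-to-output bound \eqref{eqn: input to output stability} gives $\limsup_{k\to\infty}\|e_i(k)\|\le\alpha\cdot 0=0$, i.e., $y_i(k)-\xi_i(k)\to 0$. Combining this with $\xi_i(k)\to y^*$ produces $y_i(k)\to y^*$ for every $i\in\mathcal V$, so the outputs reach consensus at the global minimiser, solving \eqref{eqn: optimal coordination problem} via its equivalent form \eqref{problem3}.

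The main obstacle is bridging the gap between convergence of the Lyapunov function (what Theorem~\ref{thm: 1} explicitly delivers through \eqref{eqn: 30}) and pointwise convergence of the iterates $(\Xi(k),\Lambda(k))$ to the saddle point $(Y^*,\Lambda^*)$, which the first step of the cascade argument relies on. I would close this gap by a discrete-time LaSalle-type argument applied to \eqref{eqn: 30}: equality there forces $\Xi(k+1)=\Xi(k)$ and $\Lambda(k+2)=\Lambda(k+1)$, which via the saddle-point conditions \eqref{eqn: equilibrium single integrator} together with strong convexity of each $f_i$ pins the trajectory down at the unique equilibrium $(Y^*,\Lambda^*)$. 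Apart from this convergence-of-iterates refinement, the rest of the argument is a clean composition of the two results already proven in the paper.
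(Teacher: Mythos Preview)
Your proposal is correct and follows the same cascade logic as the paper, but with a slightly different choice of error coordinate. The paper introduces $\tilde x_i(k)=x_i(k)-\Psi_i y^*$ (error relative to the \emph{final} optimiser), derives
\[
\tilde x_i(k+1)=(A_i-B_iK_i)\tilde x_i(k)+B_i(G_i+K_i\Psi_i)(\xi_i(k)-y^*),
\]
and then invokes Lemma~\ref{lem: input-output stability} with $\xi_i(k)-y^*$ playing the role of the vanishing input. You instead work with $\bar x_i(k)=x_i(k)-\Psi_i\xi_i(k)$ (error relative to the \emph{current} reference), which is exactly the coordinate in which Lemma~\ref{lem: input-output stability} is stated, with input $v_i(k)=\xi_i(k+1)-\xi_i(k)$. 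Your route requires two facts, $v_i\to 0$ and $\xi_i\to y^*$, whereas the paper only needs the latter; on the other hand, your application of Lemma~\ref{lem: input-output stability} is literal, while the paper is tacitly re-running its proof with a different input matrix $B_i(G_i+K_i\Psi_i)$ in place of $-\Psi_i$. Both arguments ultimately rest on $\xi_i(k)\to y^*$, and the gap you flag---that Theorem~\ref{thm: 1} as written only establishes Lyapunov decrease \eqref{eqn: 30}, not pointwise convergence of the iterates---is present in the paper's own proof as well, not just in yours; your proposed LaSalle-type completion is the natural fix (indeed \eqref{eqn: 30} already yields $\sum_k\|Y(k+1)-Y(k)\|^2<\infty$, hence $v_i(k)\to 0$ directly).
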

\begin{proof}
Let $\tilde x_i(k) = x_i(k)-\Psi_i y^*$,  we have 
\begin{equation}\small\begin{aligned} 
	\tilde x_i(k+1) = & A_ix_i(k) + B_i [-K_ix_i(k) +(G_i+ K_i\Psi_i) \xi_i (k) ] - \Psi_i y^* \\
	 = & (A_i-B_iK_i) \tilde x_i(k) + B_i(G_i+K_i\Psi_i) (\xi_i(k) - y^*).
	\end{aligned}
\end{equation}
It follows from Theorem~\ref{thm: 1} and Lemma~\ref{lem: input-output stability} that $\xi_i(k) $ converges to $y^*$. Thus, we can conclude the convergence of the proposed algorithm \eqref{eqn: controller linear system} by treating $B_i(G_i+K_i\Psi_i) (\xi_i(k) - y^*)$ as $v_i(k)$ in Lemma~\ref{lem: input-output stability}. 
\end{proof}

It is worth mentioning that recursive updating algorithms usually have the format of $    y_i(k+1)=y_i(k)+v_i(k)$, where $v_i$ is the change of $y_i(k)$ at time step $k$. In this paper, we started with this type of integrator dynamics, and then we extended our algorithm to general linear systems by using the classic internal model approach in output regulation where the reference generator has the same updating mechanism as a single integrator.  

\begin{remark}
In view of the distributed cooperative control literature, e.g., \cite{qu2009cooperative}, the classic consensus control problem can be regarded as a special case of the optimal cooperative optimisation problems studied in this paper by setting the cost functions as $f_i(y) = (y-y_i(0))^2$. The research problem considered in this paper is initialisation-independent in the sense that the global optimal solution is obtained by solving a distributed optimal coordination problem. 
\end{remark}

\begin{remark}
    {In this remark, we discuss the scalability of the solutions. The proposed algorithm is gradient-based, which is easy to compute. The agents only need to communicate with their neighbours. The communication complexity for every agent is $O(n|E|)$, where E is the set of communication edges and n is the number of iterations, because in every iteration the agent only needs to send one message to their neighbours and every message is of constant size $O(1)$. The computational complexity for every agent is also $O(n|E|)$, since every iteration the agent can process the received messages and update the local state in a linear way.}
\end{remark}

\section{Experiment Results}\label{sec_sim}
This section presents the experimental results to evaluate the effectiveness of the proposed distributed optimisation algorithm. In the beginning, numerical case studies are presented to test the proposed algorithm on multi-robots with linear systems. Then, the proposed algorithm is validated and applied on real Turtlebot robots, where all source code, distributed algorithm details and environment setting files are publicly available at our project website
% {upon acceptance}.
% \url{https://github.com/XXXXX}(hid for double-blind reviewing).
\footnote{Project website: \url{https://github.com/YD-19/DO4.git}}.
\subsection{Numerical Simulation}
We test the algorithm on two multi-robot systems consisting of four and ten robots, respectively. Each robot only communicates with its neighbouring robot, and the connected graphs are shown in Fig. \ref{graph-case 1}. For demonstration, we implement the algorithm for a group of four agents (case A), and then extend it to a network of ten agents for the scalability test of the proposed algorithm (case B). 

\begin{figure}[htbp]
    \centering
    \includegraphics[width=0.7\hsize]{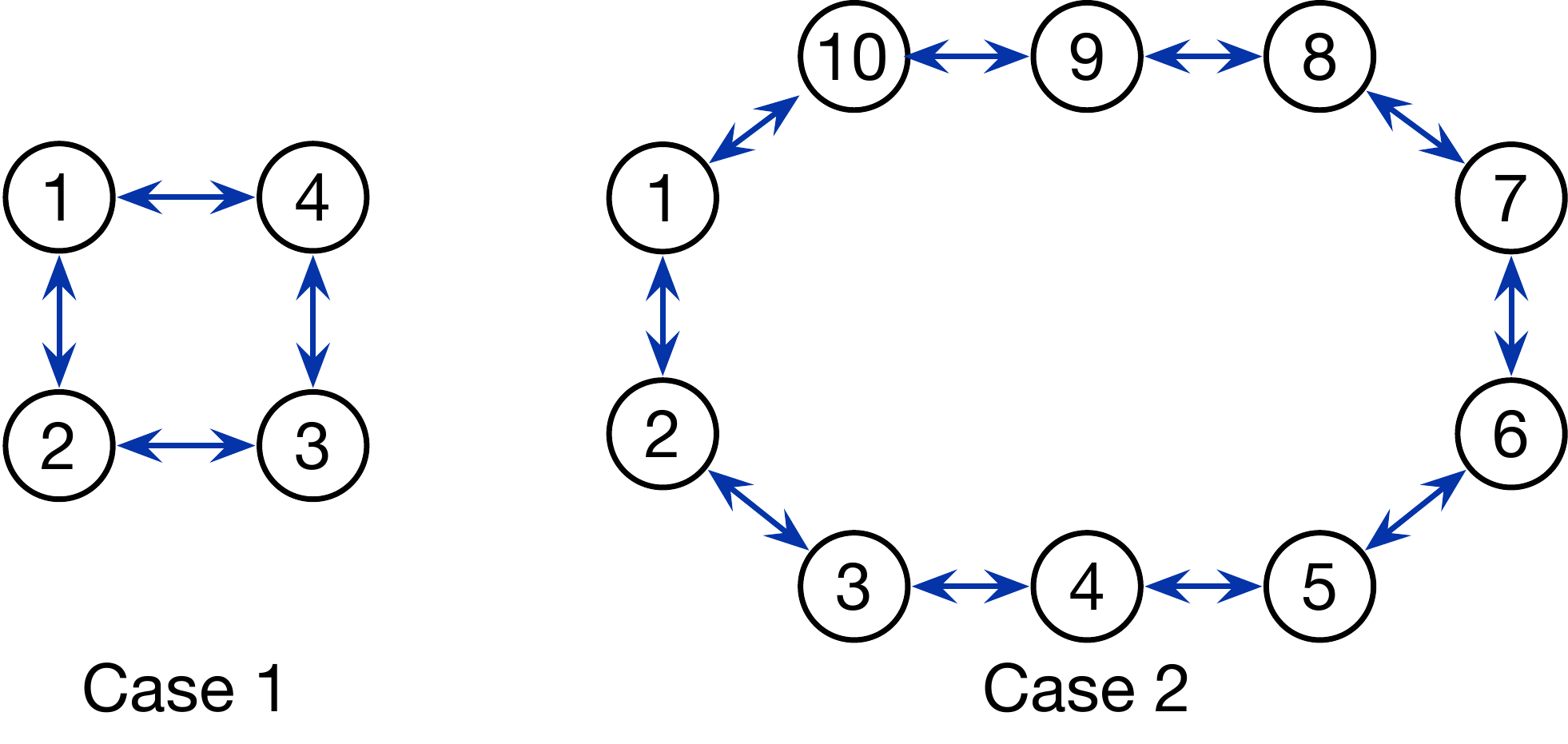}
    \caption{Ring communication topologies for a group of four and ten agents, respectively.}
    \label{graph-case 1}
\end{figure}
% As a baseline test, we consider a linear multi-robots system, where all the robots are connected. Here, we suppose that the estimated reference of each agent is static, $r_1=[10;1],r_2=[5;10];r_3=[10;2];r_4=[3;5] $. The simulation results are shown in Figs.~\ref{fig1}.
% \begin{figure}[htbp]
%     \centering
%     \includegraphics[width=\hsize]{}
%     \caption{Simulation Results on Single Integrator Systems.}
%     \label{fig1}
% \end{figure}
% \subsection{Linear Multi-Robot Systems}
% In this section we extended the single integrator systems to the linear multi-robot systems. 
\paragraph{Case A}
The agents dynamics are specified as $A=[0, 1; 2, 1], B=[1, 0; 0, 1]$ and $C=[1, 0; 0, 1]$.
%follows:
% \begin{equation}\small
% A=
% 		\left[\begin{array}{cc}
% 			0 & 1\\
% 		    2 & 1
% 		\end{array} \right] ,\
% B = \left[\begin{array}{cc} 1 & 0 \\  0 & 1 \end{array} \right] ,\
% C = \left[\begin{array}{cc} 1 & 0 \\  0 & 1 \end{array} \right]  .
% \end{equation}
The assumptions on the graph connectivity, controllablity and regulation conditions are all met. The references of the agents are set as $r_1=[10;1],r_2=[5;10];r_3=[10;2];r_4=[3;5] $. 
\begin{figure}[htbp]
    \centering
    \includegraphics[width=\hsize]{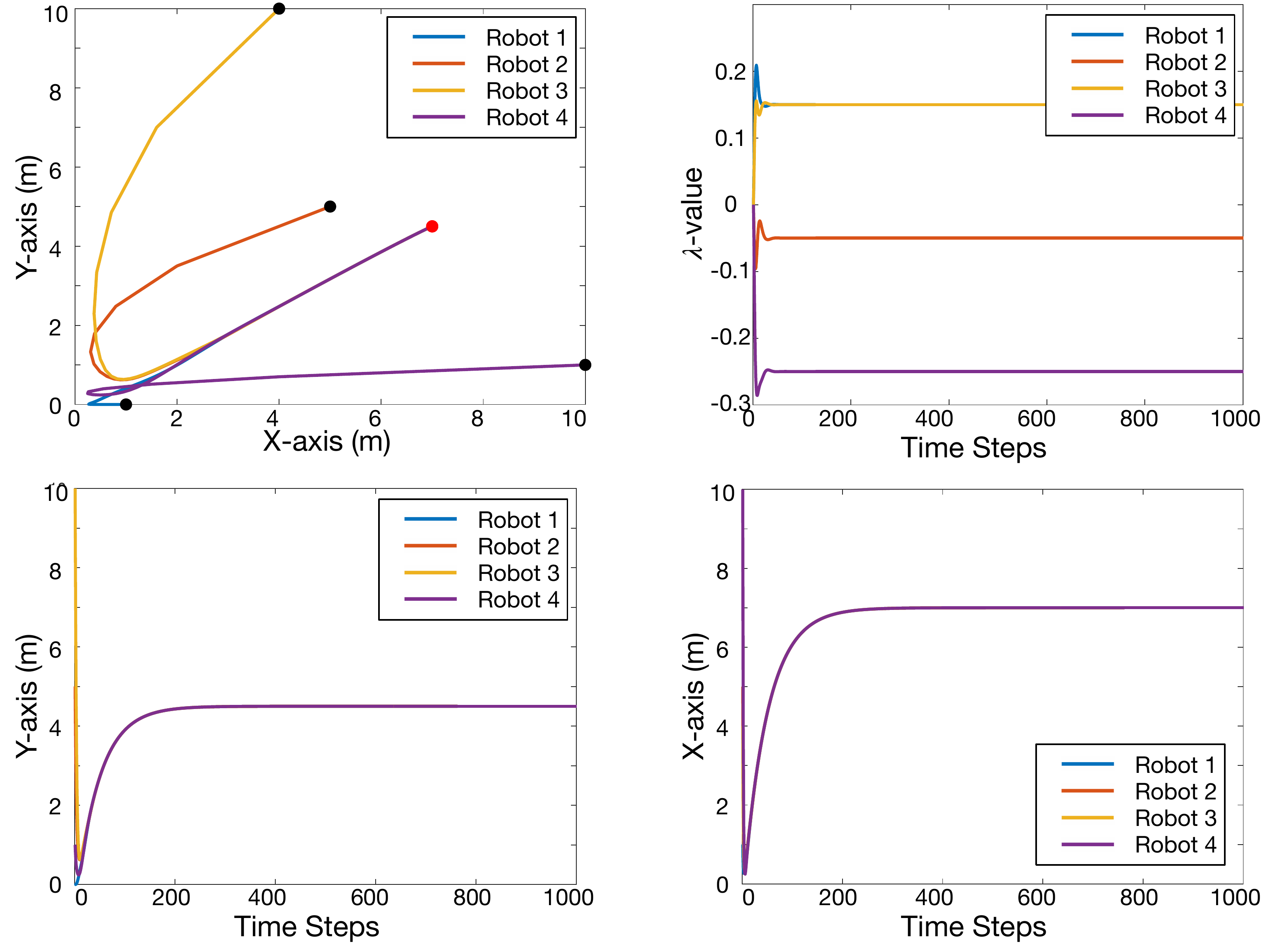}
    \caption{Simulation results for a group of four agents with static local tracking references.}
    \label{fig_caseA}
\end{figure}

The simulation results are shown in Fig. \ref{fig_caseA}. The top left figure is the top view of agents' trajectories. The black points are the initial positions of the four agents, and red points are the final positions of four agents. It can be seen that the agents are driven to the same optimal location which is independent of their initial states but related to the optimisation solution of the references specified. To illustrate more clearly, the convergence processes are shown in the rest three sub-figures, where the top right figure shows the convergence of the intermediate variable $\lambda_i$ and the bottom figures show the details of the position states $x_i$ and $y_i$, respectively.
From Fig. \ref{fig_caseA}, we can see that the agents reach the consistent goal and the parameters are converged. 
% Although the algorithm needs around 200 time steps to convergence to the optimal point, the proposed algorithm works on the linear multi-robot systems.
\begin{figure}[htbp]
    \centering
    \includegraphics[width=\hsize]{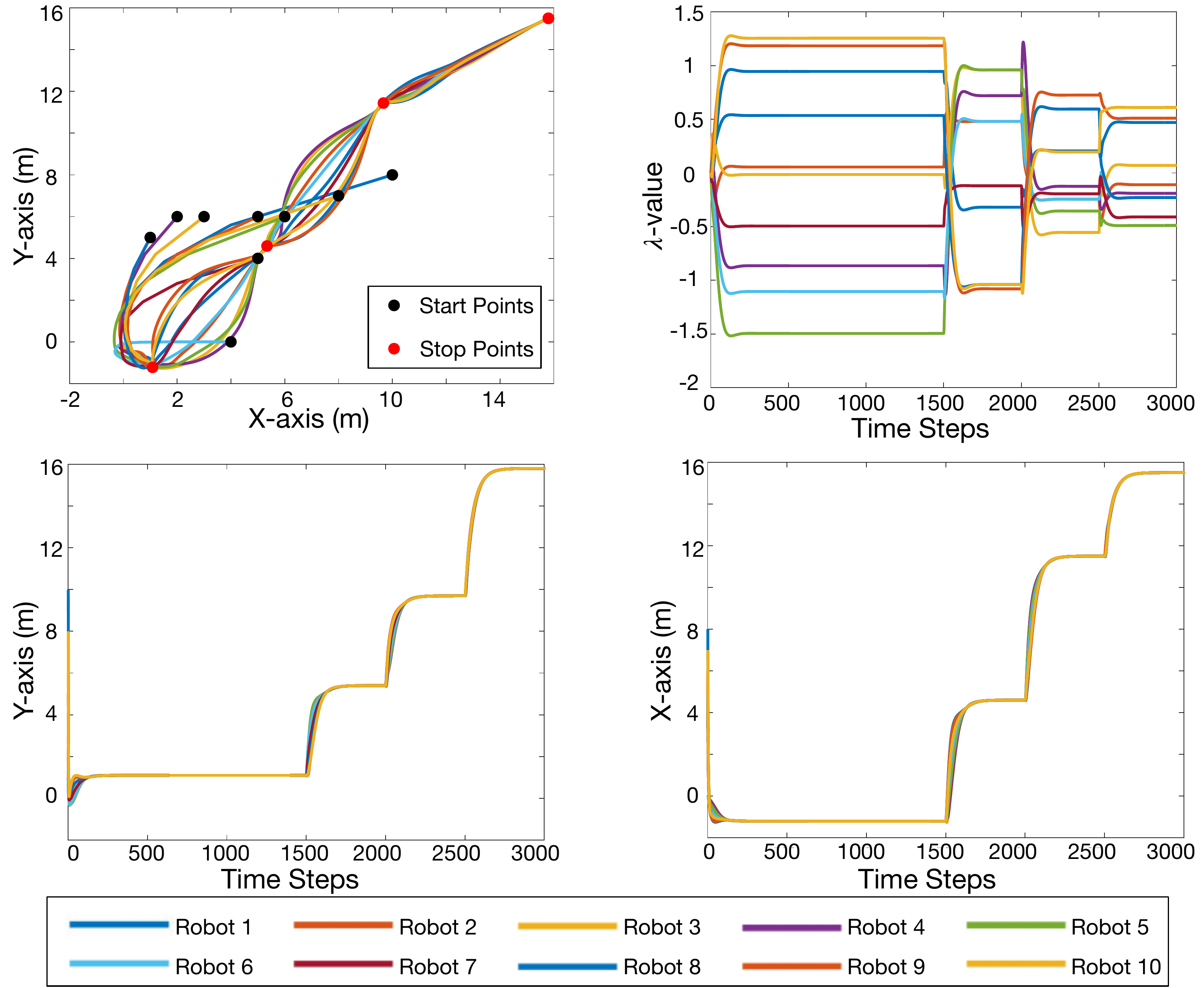}
    \caption{Simulation results for a group of ten agents with rescheduled tracking targets.}
    \label{fig_caseB}
\end{figure}

\begin{figure}[htbp]
    \centering
    \includegraphics[width=0.5\hsize]{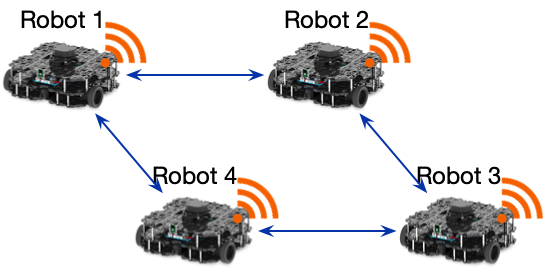}
    \caption{Communication graph for the Turtlebot network.}
    \label{graph}
\end{figure}

\begin{figure*}[h]
    \centering
    \includegraphics[width=\hsize]{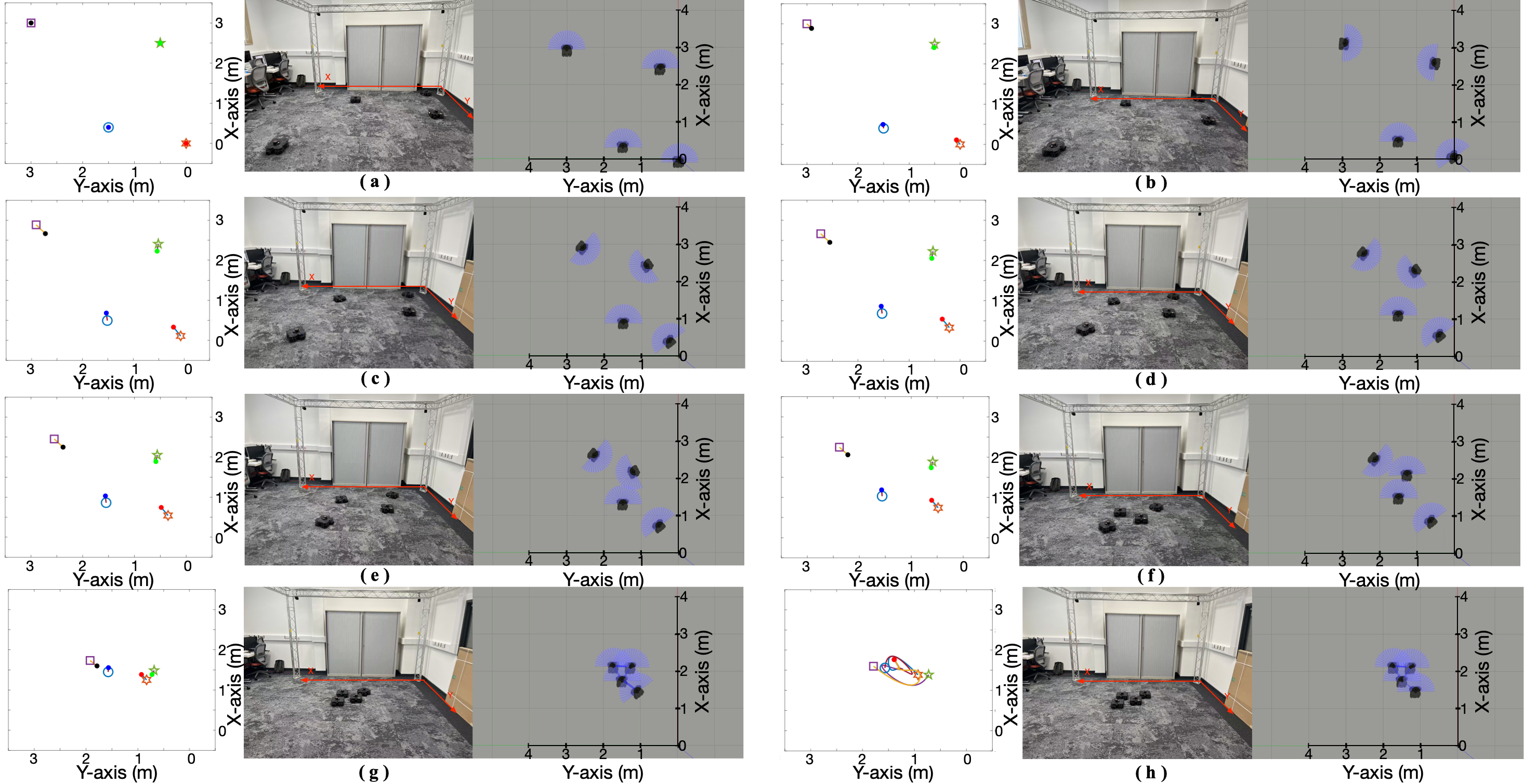}
    \caption{Experiment results on real multi-robot systems.}
    \label{fig3}
\end{figure*}

\paragraph{Case B}

We further examine the scalability of the proposed algorithm by implementing it for a set of ten agents, where each agent only needs to communicate and cooperate with its neighbours. This implies that expanding the size of the network does not incur any additional communication and computation burden, which is one of the advantageous features of distributed methods. %{\color{red}adding discussions on the simulation results of Fig 3 }   

In addition, we consider a scenario in which the robots' targets change over time. For example, when the robots are controlled to the first target position based on the initial observations, they may re-scan the environment and make a new goal orientation. Therefore, we assume that the robots would re-observe their targets at $1500$, $2000$, and $2500$ steps.

From the simulation results in Fig. \ref{fig_caseB}, it is observed that the proposed algorithm can optimally track the global optimal target no matter where and when the agents re-position their local targets. Therefore, this case study shows both the scalability and initialisation-independent properties of the proposed algorithm.

\subsection{Real Multi-Robot Systems}
In this section, we apply the proposed distributed algorithm to the simulation and the physical environments. The robot simulation is based on the ROS and Gazebo platforms. For the physical environment, we assemble four Turtlebot3 Waffle Pi robots, together with a laboratory environment, as shown in the middle of sub-figures in Fig. \ref{fig3}. The Turtlebot3 robots can only get information and communicate with their neighbours, with the communication graph shown in Fig. \ref{graph}. They are all driven based on the standard ROS platform. {We applied the distributed optimisation algorithm to generate an optimal position for every time point. The underlying PID controller will drive the robot to follow the optimal position based on the error between the current and optimal positions.} After the top-level optimisation algorithm publishes an optimised reference, the low-level PID controller will trigger two motors to track the reference signal based on Raspberry Pi and OpenCR. 

In this experiment, the environment is limited to the laboratory room size, and therefore we design the simulation and real environment within a $3.3\times 3.5$m$^2$ rectangular space. We limited the turning speed of the robots between $[-0.3, 0.3]$ rad/s and the linear speed between $[0,0.1]$ m/s. The control parameter $ \beta$ is set to $0.05$. All the robots 
%in this experiments 
know their own initial positions under the same $/world$ frame. The target of the robots are set to $r_1=[1;0.1],r_2=[0.5;1];r_3=[1;0.2];r_4=[0.3;0.5] $.

The sub-figures (a) to (j) are the exemplar moments during the consensus process, and a recorded video is available
% \footnote{Simulation video: 
\url{https://youtu.be/a0k4KicX9u4}.
% }.
%\url{{https://youtu.be/XXXXX}}(concealed for double-blind reviewing). 
In each sub-figure, the left simulated graph provides the high-level reference that is generated by the proposed distributed optimisation algorithm, while the right two figures show the tracking performances of the simulated and real robots. It is noted that the robots approach the optimised reference generated by cooperative optimisation instead of their local target, and the stopping distance is set as $0.3$m to avoid the collision of robots. 

{In the real-world experiment, there are errors in the robot system dynamic information (position) of the vehicle due to the interference of external factors, such as different ground friction. Nevertheless, the proposed algorithm can deal with them and update its local information by communicating with adjacent agents to optimise the global cost.} From the figures and video, we observe that the proposed algorithm always optimises the global optimal position for each robot. The robots reach the consensus of the final target by the proposed approach. The problem under our consideration covers a wider range of cooperative control problems, including initialisation-dependent consensus problems, and collaborative target tracking and search problems. The deployment and realisation of the proposed algorithm on real robotic systems demonstrate its significant potential for real-world applications.

\section{CONCLUSIONS}\label{sec_con}
This paper proposes an output regulation-based distributed optimisation algorithm in the context of multi-agent cooperative control. It optimises the local objectives of the agents by letting them communicate with their neighbours, and in the meantime ensures that the agents can  reach the global optimal. 
The  algorithm can track the global target, instead of converging to a centre based on initial positions, and moreover, it can handle multi-agent systems with linear and heterogeneous dynamics.
%and by combining distributed and the output regulation techniques. 
Both theoretical guarantee and experimental validations are studies, which will promote the future deployment of distributed cooperative optimal control to solve real world applications with guaranteed convergence and optimality. 

% This work is our first step to use optimisation method to control multi-robots, we have not integrated any collision avoidance algorithm, which is under the scope of our future works. Our future works will also consider include disconnection during the optimisation process, communication delay, and switching communication topologies.

%\textbf{Acknowledgement.}

% \section*{APPENDIX}

% Appendixes should appear before the acknowledgment.

 \section*{ACKNOWLEDGMENT}
\includegraphics[height=8pt]{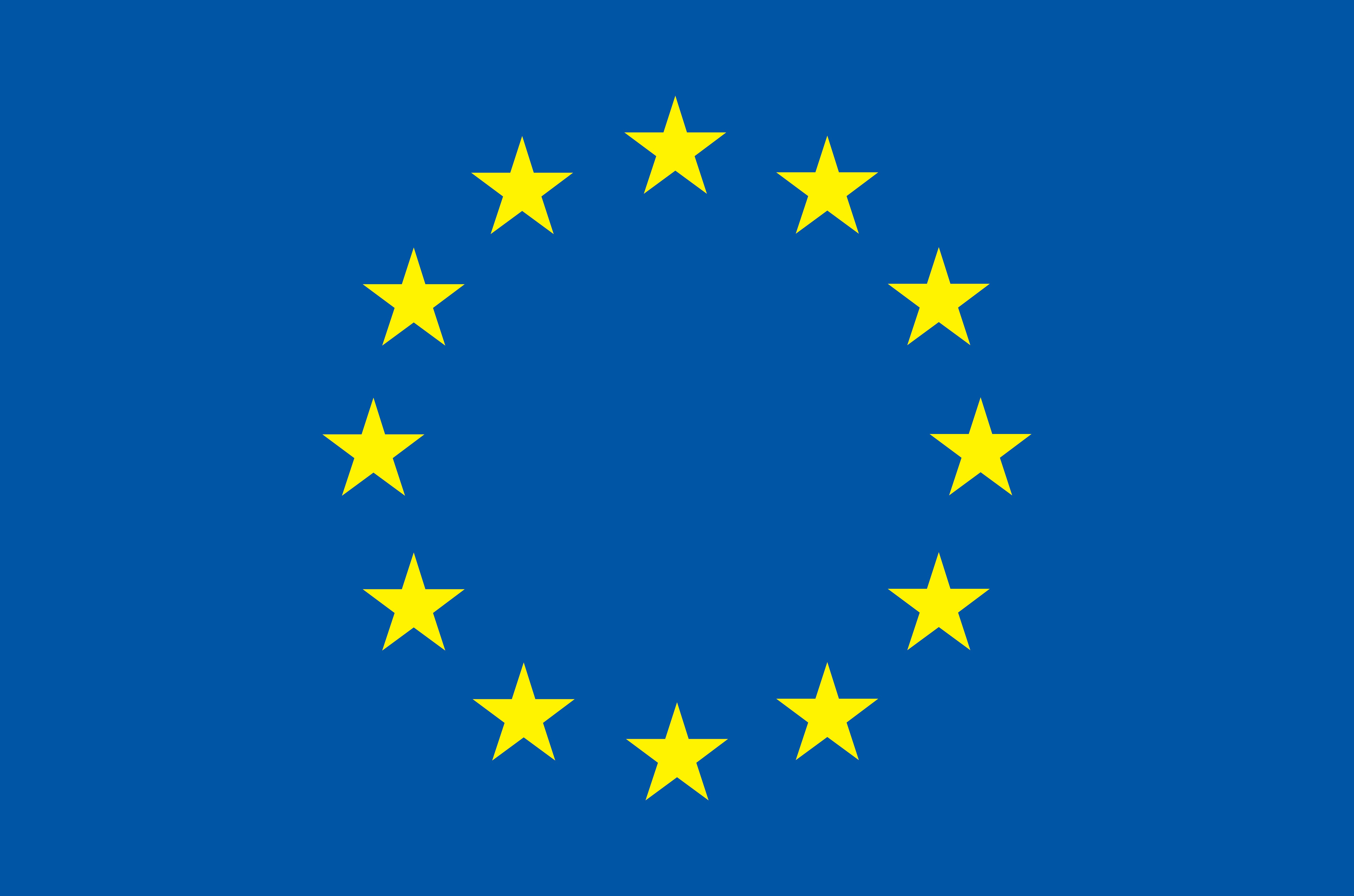} This project has received funding from the 
European Union’s Horizon 2020 research and innovation programme under grant 
agreement No 956123, and is also supported by the UK EPSRC under project [EP/T026995/1].

% \newpage

% \bibliographystyle{IEEEtran}
% \bibliography{sample}

% \begin{acks}
% If you wish to include any acknowledgments in your paper (e.g., to 
% people or funding agencies), please do so using the `\texttt{acks}' 
% environment. Note that the text of your acknowledgments will be omitted
% if you compile your document with the `\texttt{anonymous}' option.
% \end{acks}

%%%%%%%%%%%%%%%%%%%%%%%%%%%%%%%%%%%%%%%%%%%%%%%%%%%%%%%%%%%%%%%%%%%%%%%%

%%% The next two lines define, first, the bibliography style to be 
%%% applied, and, second, the bibliography file to be used.
%\newpage
\bibliographystyle{ACM-Reference-Format} 
\bibliography{sample}

%%% -*-BibTeX-*-
%%% Do NOT edit. File created by BibTeX with style
%%% ACM-Reference-Format-Journals [18-Jan-2012].

\begin{thebibliography}{48}

%%% ====================================================================
%%% NOTE TO THE USER: you can override these defaults by providing
%%% customized versions of any of these macros before the \bibliography
%%% command.  Each of them MUST provide its own final punctuation,
%%% except for \shownote{}, \showDOI{}, and \showURL{}.  The latter two
%%% do not use final punctuation, in order to avoid confusing it with
%%% the Web address.
%%%
%%% To suppress output of a particular field, define its macro to expand
%%% to an empty string, or better, \unskip, like this:
%%%
%%% \newcommand{\showDOI}[1]{\unskip}   % LaTeX syntax
%%%
%%% \def \showDOI #1{\unskip}           % plain TeX syntax
%%%
%%% ====================================================================

\ifx \showCODEN    \undefined \def \showCODEN     #1{\unskip}     \fi
\ifx \showDOI      \undefined \def \showDOI       #1{#1}\fi
\ifx \showISBNx    \undefined \def \showISBNx     #1{\unskip}     \fi
\ifx \showISBNxiii \undefined \def \showISBNxiii  #1{\unskip}     \fi
\ifx \showISSN     \undefined \def \showISSN      #1{\unskip}     \fi
\ifx \showLCCN     \undefined \def \showLCCN      #1{\unskip}     \fi
\ifx \shownote     \undefined \def \shownote      #1{#1}          \fi
\ifx \showarticletitle \undefined \def \showarticletitle #1{#1}   \fi
\ifx \showURL      \undefined \def \showURL       {\relax}        \fi
% The following commands are used for tagged output and should be
% invisible to TeX
\providecommand\bibfield[2]{#2}
\providecommand\bibinfo[2]{#2}
\providecommand\natexlab[1]{#1}
\providecommand\showeprint[2][]{arXiv:#2}

\bibitem[\protect\citeauthoryear{Azzollini, Mimmo, Gentilini, and
  Marconi}{Azzollini et~al\mbox{.}}{2021}]%
        {azzollini2021UAV}
\bibfield{author}{\bibinfo{person}{Ilario~Antonio Azzollini},
  \bibinfo{person}{Nicola Mimmo}, \bibinfo{person}{Lorenzo Gentilini}, {and}
  \bibinfo{person}{Lorenzo Marconi}.} \bibinfo{year}{2021}\natexlab{}.
\newblock \showarticletitle{UAV-Based Search and Rescue in Avalanches using
  ARVA: An Extremum Seeking Approach}.
\newblock \bibinfo{journal}{\emph{arXiv preprint arXiv:2106.14514}}
  (\bibinfo{year}{2021}).
\newblock


\bibitem[\protect\citeauthoryear{Bertsekas}{Bertsekas}{2009}]%
        {bertsekas2009convex}
\bibfield{author}{\bibinfo{person}{Dimitri Bertsekas}.}
  \bibinfo{year}{2009}\natexlab{}.
\newblock \bibinfo{booktitle}{\emph{Convex optimization theory}}.
  Vol.~\bibinfo{volume}{1}.
\newblock \bibinfo{publisher}{Athena Scientific}.
\newblock


\bibitem[\protect\citeauthoryear{Boyd, Boyd, and Vandenberghe}{Boyd
  et~al\mbox{.}}{2004}]%
        {boyd2004convex}
\bibfield{author}{\bibinfo{person}{Stephen Boyd}, \bibinfo{person}{Stephen~P
  Boyd}, {and} \bibinfo{person}{Lieven Vandenberghe}.}
  \bibinfo{year}{2004}\natexlab{}.
\newblock \bibinfo{booktitle}{\emph{Convex optimization}}.
\newblock \bibinfo{publisher}{Cambridge university press}.
\newblock


\bibitem[\protect\citeauthoryear{Chen}{Chen}{1984}]%
        {chen1984linear}
\bibfield{author}{\bibinfo{person}{Chi-Tsong Chen}.}
  \bibinfo{year}{1984}\natexlab{}.
\newblock \bibinfo{booktitle}{\emph{Linear system theory and design}}.
\newblock \bibinfo{publisher}{Saunders college publishing}.
\newblock


\bibitem[\protect\citeauthoryear{Cherukuri and Cortes}{Cherukuri and
  Cortes}{2016}]%
        {cherukuri2016initialization}
\bibfield{author}{\bibinfo{person}{Ashish Cherukuri} {and}
  \bibinfo{person}{Jorge Cortes}.} \bibinfo{year}{2016}\natexlab{}.
\newblock \showarticletitle{Initialization-free distributed coordination for
  economic dispatch under varying loads and generator commitment}.
\newblock \bibinfo{journal}{\emph{Automatica}}  \bibinfo{volume}{74}
  (\bibinfo{year}{2016}), \bibinfo{pages}{183--193}.
\newblock


\bibitem[\protect\citeauthoryear{Ding}{Ding}{2003}]%
        {ding2003global}
\bibfield{author}{\bibinfo{person}{Zhengtao Ding}.}
  \bibinfo{year}{2003}\natexlab{}.
\newblock \showarticletitle{Global stabilization and disturbance suppression of
  a class of nonlinear systems with uncertain internal model}.
\newblock \bibinfo{journal}{\emph{Automatica}} \bibinfo{volume}{39},
  \bibinfo{number}{3} (\bibinfo{year}{2003}), \bibinfo{pages}{471--479}.
\newblock


\bibitem[\protect\citeauthoryear{Ding}{Ding}{2013}]%
        {ding2013back}
\bibfield{author}{\bibinfo{person}{Zhengtao Ding}.}
  \bibinfo{year}{2013}\natexlab{}.
\newblock \showarticletitle{Consensus output regulation of a class of
  heterogeneous nonlinear systems}.
\newblock \bibinfo{journal}{\emph{IEEE Trans. Automat. Control}}
  \bibinfo{volume}{58}, \bibinfo{number}{10} (\bibinfo{year}{2013}),
  \bibinfo{pages}{2648--2653}.
\newblock


\bibitem[\protect\citeauthoryear{Dong, Chen, Zhao, and Huang}{Dong
  et~al\mbox{.}}{2022}]%
        {dong2022short}
\bibfield{author}{\bibinfo{person}{Yi Dong}, \bibinfo{person}{Yang Chen},
  \bibinfo{person}{Xingyu Zhao}, {and} \bibinfo{person}{Xiaowei Huang}.}
  \bibinfo{year}{2022}\natexlab{}.
\newblock \showarticletitle{Short-term Load Forecasting with Distributed Long
  Short-Term Memory}.
\newblock \bibinfo{journal}{\emph{arXiv preprint arXiv:2208.01147}}
  (\bibinfo{year}{2022}).
\newblock


\bibitem[\protect\citeauthoryear{Dong, Zhao, and Ding}{Dong
  et~al\mbox{.}}{2019}]%
        {dong2019demand}
\bibfield{author}{\bibinfo{person}{Yi Dong}, \bibinfo{person}{Tianqiao Zhao},
  {and} \bibinfo{person}{Zhengtao Ding}.} \bibinfo{year}{2019}\natexlab{}.
\newblock \showarticletitle{Demand-side management using a distributed
  initialisation-free optimisation in a smart grid}.
\newblock \bibinfo{journal}{\emph{IET renewable power generation}}
  \bibinfo{volume}{13}, \bibinfo{number}{9} (\bibinfo{year}{2019}),
  \bibinfo{pages}{1533--1543}.
\newblock


\bibitem[\protect\citeauthoryear{Huang}{Huang}{2004}]%
        {huang2004nonlinear}
\bibfield{author}{\bibinfo{person}{Jie Huang}.}
  \bibinfo{year}{2004}\natexlab{}.
\newblock \bibinfo{booktitle}{\emph{Nonlinear output regulation: theory and
  applications}}.
\newblock \bibinfo{publisher}{SIAM}.
\newblock


\bibitem[\protect\citeauthoryear{Kuriki and Namerikawa}{Kuriki and
  Namerikawa}{2014}]%
        {kuriki2014consensus}
\bibfield{author}{\bibinfo{person}{Yasuhiro Kuriki} {and} \bibinfo{person}{Toru
  Namerikawa}.} \bibinfo{year}{2014}\natexlab{}.
\newblock \showarticletitle{Consensus-based cooperative formation control with
  collision avoidance for a multi-UAV system}. In
  \bibinfo{booktitle}{\emph{2014 American Control Conference}}. IEee,
  \bibinfo{pages}{2077--2082}.
\newblock


\bibitem[\protect\citeauthoryear{Lei, Chen, and Fang}{Lei
  et~al\mbox{.}}{2016}]%
        {lei2016primal}
\bibfield{author}{\bibinfo{person}{Jinlong Lei}, \bibinfo{person}{Han-Fu Chen},
  {and} \bibinfo{person}{Hai-Tao Fang}.} \bibinfo{year}{2016}\natexlab{}.
\newblock \showarticletitle{Primal--dual algorithm for distributed constrained
  optimization}.
\newblock \bibinfo{journal}{\emph{Systems \& Control Letters}}
  \bibinfo{volume}{96} (\bibinfo{year}{2016}), \bibinfo{pages}{110--117}.
\newblock


\bibitem[\protect\citeauthoryear{Li, Yu, Li, and Xie}{Li et~al\mbox{.}}{2022}]%
        {li2022exponential}
\bibfield{author}{\bibinfo{person}{Li Li}, \bibinfo{person}{Yang Yu},
  \bibinfo{person}{Xiuxian Li}, {and} \bibinfo{person}{Lihua Xie}.}
  \bibinfo{year}{2022}\natexlab{}.
\newblock \showarticletitle{Exponential convergence of distributed optimization
  for heterogeneous linear multi-agent systems over unbalanced digraphs}.
\newblock \bibinfo{journal}{\emph{Automatica}}  \bibinfo{volume}{141}
  (\bibinfo{year}{2022}), \bibinfo{pages}{110259}.
\newblock


\bibitem[\protect\citeauthoryear{Li}{Li}{2021}]%
        {li2021distributed}
\bibfield{author}{\bibinfo{person}{Zhongguo Li}.}
  \bibinfo{year}{2021}\natexlab{}.
\newblock \bibinfo{booktitle}{\emph{Distributed Cooperative and Competitive
  Optimisation Over Networks}}.
\newblock \bibinfo{publisher}{The University of Manchester (United Kingdom)}.
\newblock


\bibitem[\protect\citeauthoryear{Li, Chen, and Yang}{Li et~al\mbox{.}}{2021a}]%
        {li2021concurrent}
\bibfield{author}{\bibinfo{person}{Zhongguo Li}, \bibinfo{person}{Wen-Hua
  Chen}, {and} \bibinfo{person}{Jun Yang}.} \bibinfo{year}{2021}\natexlab{a}.
\newblock \showarticletitle{Concurrent Learning Based Dual Control for
  Exploration and Exploitation in Autonomous Search}.
\newblock \bibinfo{journal}{\emph{arXiv preprint arXiv:2108.08062}}
  (\bibinfo{year}{2021}).
\newblock


\bibitem[\protect\citeauthoryear{Li, Dong, Liang, and Ding}{Li
  et~al\mbox{.}}{2021b}]%
        {Li2021Automatica}
\bibfield{author}{\bibinfo{person}{Zhongguo Li}, \bibinfo{person}{Zhen Dong},
  \bibinfo{person}{Zhongchao Liang}, {and} \bibinfo{person}{Zhengtao Ding}.}
  \bibinfo{year}{2021}\natexlab{b}.
\newblock \showarticletitle{Surrogate-based distributed optimisation for
  expensive black-box functions}.
\newblock \bibinfo{journal}{\emph{Automatica}}  \bibinfo{volume}{125}
  (\bibinfo{year}{2021}), \bibinfo{pages}{109407}.
\newblock
\showISSN{00051098}
\urldef\tempurl%
\url{https://doi.org/10.1016/j.automatica.2020.109407}
\showDOI{\tempurl}


\bibitem[\protect\citeauthoryear{Li, Duan, and Chen}{Li et~al\mbox{.}}{2011}]%
        {li2011consensus}
\bibfield{author}{\bibinfo{person}{Zhongkui Li}, \bibinfo{person}{Zhisheng
  Duan}, {and} \bibinfo{person}{Guanrong Chen}.}
  \bibinfo{year}{2011}\natexlab{}.
\newblock \showarticletitle{Consensus of discrete-time linear multi-agent
  systems with observer-type protocols}.
\newblock \bibinfo{journal}{\emph{arXiv preprint arXiv:1102.5599}}
  (\bibinfo{year}{2011}).
\newblock


\bibitem[\protect\citeauthoryear{Martinovi{\'c}, Ze{\v{c}}evi{\'c}, and
  Krstaji{\'c}}{Martinovi{\'c} et~al\mbox{.}}{2022}]%
        {martinovic2022cooperative}
\bibfield{author}{\bibinfo{person}{Luka Martinovi{\'c}},
  \bibinfo{person}{{\v{Z}}arko Ze{\v{c}}evi{\'c}}, {and}
  \bibinfo{person}{Bo{\v{z}}o Krstaji{\'c}}.} \bibinfo{year}{2022}\natexlab{}.
\newblock \showarticletitle{Cooperative tracking control of single-integrator
  multi-agent systems with multiple leaders}.
\newblock \bibinfo{journal}{\emph{European Journal of Control}}
  \bibinfo{volume}{63} (\bibinfo{year}{2022}), \bibinfo{pages}{232--239}.
\newblock


\bibitem[\protect\citeauthoryear{Nedi{\'c} and Olshevsky}{Nedi{\'c} and
  Olshevsky}{2014}]%
        {nedic2014distributed}
\bibfield{author}{\bibinfo{person}{Angelia Nedi{\'c}} {and}
  \bibinfo{person}{Alex Olshevsky}.} \bibinfo{year}{2014}\natexlab{}.
\newblock \showarticletitle{Distributed optimization over time-varying directed
  graphs}.
\newblock \bibinfo{journal}{\emph{IEEE Trans. Automat. Control}}
  \bibinfo{volume}{60}, \bibinfo{number}{3} (\bibinfo{year}{2014}),
  \bibinfo{pages}{601--615}.
\newblock


\bibitem[\protect\citeauthoryear{Nedic and Ozdaglar}{Nedic and
  Ozdaglar}{2009}]%
        {nedic2009distributed}
\bibfield{author}{\bibinfo{person}{Angelia Nedic} {and} \bibinfo{person}{Asuman
  Ozdaglar}.} \bibinfo{year}{2009}\natexlab{}.
\newblock \showarticletitle{Distributed subgradient methods for multi-agent
  optimization}.
\newblock \bibinfo{journal}{\emph{IEEE Trans. Automat. Control}}
  \bibinfo{volume}{54}, \bibinfo{number}{1} (\bibinfo{year}{2009}),
  \bibinfo{pages}{48--61}.
\newblock


\bibitem[\protect\citeauthoryear{Nedic, Ozdaglar, and Parrilo}{Nedic
  et~al\mbox{.}}{2010}]%
        {nedic2010constrained}
\bibfield{author}{\bibinfo{person}{Angelia Nedic}, \bibinfo{person}{Asuman
  Ozdaglar}, {and} \bibinfo{person}{Pablo~A Parrilo}.}
  \bibinfo{year}{2010}\natexlab{}.
\newblock \showarticletitle{Constrained consensus and optimization in
  multi-agent networks}.
\newblock \bibinfo{journal}{\emph{IEEE Trans. Automat. Control}}
  \bibinfo{volume}{55}, \bibinfo{number}{4} (\bibinfo{year}{2010}),
  \bibinfo{pages}{922--938}.
\newblock


\bibitem[\protect\citeauthoryear{Ning, Han, and Zuo}{Ning
  et~al\mbox{.}}{2017}]%
        {ning2017distributed}
\bibfield{author}{\bibinfo{person}{Boda Ning}, \bibinfo{person}{Qing-Long Han},
  {and} \bibinfo{person}{Zongyu Zuo}.} \bibinfo{year}{2017}\natexlab{}.
\newblock \showarticletitle{Distributed optimization for multiagent systems: An
  edge-based fixed-time consensus approach}.
\newblock \bibinfo{journal}{\emph{IEEE Transactions on Cybernetics}}
  \bibinfo{volume}{49}, \bibinfo{number}{1} (\bibinfo{year}{2017}),
  \bibinfo{pages}{122--132}.
\newblock


\bibitem[\protect\citeauthoryear{Ning, Han, and Zuo}{Ning
  et~al\mbox{.}}{2020}]%
        {ning2020bipartite}
\bibfield{author}{\bibinfo{person}{Boda Ning}, \bibinfo{person}{Qing-Long Han},
  {and} \bibinfo{person}{Zongyu Zuo}.} \bibinfo{year}{2020}\natexlab{}.
\newblock \showarticletitle{Bipartite consensus tracking for second-order
  multiagent systems: A time-varying function-based preset-time approach}.
\newblock \bibinfo{journal}{\emph{IEEE Trans. Automat. Control}}
  \bibinfo{volume}{66}, \bibinfo{number}{6} (\bibinfo{year}{2020}),
  \bibinfo{pages}{2739--2745}.
\newblock


\bibitem[\protect\citeauthoryear{Olfati-Saber and Murray}{Olfati-Saber and
  Murray}{2004}]%
        {olfati2004consensus}
\bibfield{author}{\bibinfo{person}{Reza Olfati-Saber} {and}
  \bibinfo{person}{Richard~M Murray}.} \bibinfo{year}{2004}\natexlab{}.
\newblock \showarticletitle{Consensus problems in networks of agents with
  switching topology and time-delays}.
\newblock \bibinfo{journal}{\emph{IEEE Trans. Automat. Control}}
  \bibinfo{volume}{49}, \bibinfo{number}{9} (\bibinfo{year}{2004}),
  \bibinfo{pages}{1520--1533}.
\newblock


\bibitem[\protect\citeauthoryear{Olson}{Olson}{2011}]%
        {olson2011apriltag}
\bibfield{author}{\bibinfo{person}{Edwin Olson}.}
  \bibinfo{year}{2011}\natexlab{}.
\newblock \showarticletitle{AprilTag: A robust and flexible visual fiducial
  system}. In \bibinfo{booktitle}{\emph{2011 IEEE international conference on
  robotics and automation}}. IEEE, \bibinfo{pages}{3400--3407}.
\newblock


\bibitem[\protect\citeauthoryear{Park and Oh}{Park and Oh}{2020}]%
        {park2020cooperative}
\bibfield{author}{\bibinfo{person}{Minkyu Park} {and} \bibinfo{person}{Hyondong
  Oh}.} \bibinfo{year}{2020}\natexlab{}.
\newblock \showarticletitle{Cooperative information-driven source search and
  estimation for multiple agents}.
\newblock \bibinfo{journal}{\emph{Information Fusion}}  \bibinfo{volume}{54}
  (\bibinfo{year}{2020}), \bibinfo{pages}{72--84}.
\newblock
\showISSN{1566-2535}


\bibitem[\protect\citeauthoryear{Qin, Ma, Shi, and Wang}{Qin
  et~al\mbox{.}}{2016}]%
        {qin2016recent}
\bibfield{author}{\bibinfo{person}{Jiahu Qin}, \bibinfo{person}{Qichao Ma},
  \bibinfo{person}{Yang Shi}, {and} \bibinfo{person}{Long Wang}.}
  \bibinfo{year}{2016}\natexlab{}.
\newblock \showarticletitle{Recent advances in consensus of multi-agent
  systems: A brief survey}.
\newblock \bibinfo{journal}{\emph{IEEE Transactions on Industrial Electronics}}
  \bibinfo{volume}{64}, \bibinfo{number}{6} (\bibinfo{year}{2016}),
  \bibinfo{pages}{4972--4983}.
\newblock


\bibitem[\protect\citeauthoryear{Qiu, Liu, and Xie}{Qiu et~al\mbox{.}}{2016}]%
        {qiu2016distributed}
\bibfield{author}{\bibinfo{person}{Zhirong Qiu}, \bibinfo{person}{Shuai Liu},
  {and} \bibinfo{person}{Lihua Xie}.} \bibinfo{year}{2016}\natexlab{}.
\newblock \showarticletitle{Distributed constrained optimal consensus of
  multi-agent systems}.
\newblock \bibinfo{journal}{\emph{Automatica}}  \bibinfo{volume}{68}
  (\bibinfo{year}{2016}), \bibinfo{pages}{209--215}.
\newblock


\bibitem[\protect\citeauthoryear{Qu}{Qu}{2009}]%
        {qu2009cooperative}
\bibfield{author}{\bibinfo{person}{Zhihua Qu}.}
  \bibinfo{year}{2009}\natexlab{}.
\newblock \bibinfo{booktitle}{\emph{Cooperative control of dynamical systems:
  applications to autonomous vehicles}}.
\newblock \bibinfo{publisher}{Springer Science \& Business Media}.
\newblock


\bibitem[\protect\citeauthoryear{Redmon, Divvala, Girshick, and Farhadi}{Redmon
  et~al\mbox{.}}{2016}]%
        {redmon2016you}
\bibfield{author}{\bibinfo{person}{Joseph Redmon}, \bibinfo{person}{Santosh
  Divvala}, \bibinfo{person}{Ross Girshick}, {and} \bibinfo{person}{Ali
  Farhadi}.} \bibinfo{year}{2016}\natexlab{}.
\newblock \showarticletitle{You only look once: Unified, real-time object
  detection}. In \bibinfo{booktitle}{\emph{Proceedings of the IEEE conference
  on computer vision and pattern recognition}}. \bibinfo{pages}{779--788}.
\newblock


\bibitem[\protect\citeauthoryear{Ren and Beard}{Ren and Beard}{2005}]%
        {ren2005consensus}
\bibfield{author}{\bibinfo{person}{Wei Ren} {and} \bibinfo{person}{Randal~W
  Beard}.} \bibinfo{year}{2005}\natexlab{}.
\newblock \showarticletitle{Consensus seeking in multiagent systems under
  dynamically changing interaction topologies}.
\newblock \bibinfo{journal}{\emph{IEEE Trans. Automat. Control}}
  \bibinfo{volume}{50}, \bibinfo{number}{5} (\bibinfo{year}{2005}),
  \bibinfo{pages}{655--661}.
\newblock


\bibitem[\protect\citeauthoryear{Ren, Beard, and Atkins}{Ren
  et~al\mbox{.}}{2007}]%
        {ren2007information}
\bibfield{author}{\bibinfo{person}{Wei Ren}, \bibinfo{person}{Randal~W Beard},
  {and} \bibinfo{person}{Ella~M Atkins}.} \bibinfo{year}{2007}\natexlab{}.
\newblock \showarticletitle{Information consensus in multivehicle cooperative
  control}.
\newblock \bibinfo{journal}{\emph{IEEE Control systems magazine}}
  \bibinfo{volume}{27}, \bibinfo{number}{2} (\bibinfo{year}{2007}),
  \bibinfo{pages}{71--82}.
\newblock


\bibitem[\protect\citeauthoryear{Ren and Dimarogonas}{Ren and
  Dimarogonas}{2022}]%
        {ren2022event}
\bibfield{author}{\bibinfo{person}{Wei Ren} {and} \bibinfo{person}{Dimos~V
  Dimarogonas}.} \bibinfo{year}{2022}\natexlab{}.
\newblock \showarticletitle{Event-Triggered Tracking Control of Networked
  Multi-Agent Systems}.
\newblock \bibinfo{journal}{\emph{IEEE Trans. Automat. Control}}
  (\bibinfo{year}{2022}).
\newblock


\bibitem[\protect\citeauthoryear{Ristic, Gilliam, Moran, and Palmer}{Ristic
  et~al\mbox{.}}{2020}]%
        {ristic2020decentralised}
\bibfield{author}{\bibinfo{person}{Branko Ristic}, \bibinfo{person}{Christopher
  Gilliam}, \bibinfo{person}{William Moran}, {and} \bibinfo{person}{Jennifer~L.
  Palmer}.} \bibinfo{year}{2020}\natexlab{}.
\newblock \showarticletitle{Decentralised multi-platform search for a hazardous
  source in a turbulent flow}.
\newblock \bibinfo{journal}{\emph{Information Fusion}}  \bibinfo{volume}{58}
  (\bibinfo{year}{2020}), \bibinfo{pages}{13--23}.
\newblock
\showISSN{1566-2535}


\bibitem[\protect\citeauthoryear{Shi and Hong}{Shi and Hong}{2009}]%
        {shi2009global}
\bibfield{author}{\bibinfo{person}{Guodong Shi} {and} \bibinfo{person}{Yiguang
  Hong}.} \bibinfo{year}{2009}\natexlab{}.
\newblock \showarticletitle{Global target aggregation and state agreement of
  nonlinear multi-agent systems with switching topologies}.
\newblock \bibinfo{journal}{\emph{Automatica}} \bibinfo{volume}{45},
  \bibinfo{number}{5} (\bibinfo{year}{2009}), \bibinfo{pages}{1165--1175}.
\newblock


\bibitem[\protect\citeauthoryear{Soria, Schiano, and Floreano}{Soria
  et~al\mbox{.}}{2021a}]%
        {soria2022distributed}
\bibfield{author}{\bibinfo{person}{Enrica Soria}, \bibinfo{person}{Fabrizio
  Schiano}, {and} \bibinfo{person}{Dario Floreano}.}
  \bibinfo{year}{2021}\natexlab{a}.
\newblock \showarticletitle{Distributed Predictive Drone Swarms in Cluttered
  Environments}.
\newblock \bibinfo{journal}{\emph{IEEE Robotics and Automation Letters}}
  \bibinfo{volume}{7}, \bibinfo{number}{1} (\bibinfo{year}{2021}),
  \bibinfo{pages}{73--80}.
\newblock
\showISSN{2377-3766}


\bibitem[\protect\citeauthoryear{Soria, Schiano, and Floreano}{Soria
  et~al\mbox{.}}{2021b}]%
        {soria2021nature}
\bibfield{author}{\bibinfo{person}{Enrica Soria}, \bibinfo{person}{Fabrizio
  Schiano}, {and} \bibinfo{person}{Dario Floreano}.}
  \bibinfo{year}{2021}\natexlab{b}.
\newblock \showarticletitle{Predictive control of aerial swarms in cluttered
  environments}.
\newblock \bibinfo{journal}{\emph{Nature Machine Intelligence}}
  \bibinfo{volume}{3}, \bibinfo{number}{6} (\bibinfo{year}{2021}),
  \bibinfo{pages}{545--554}.
\newblock
\showISSN{2522-5839}


\bibitem[\protect\citeauthoryear{Tran, Wang, Liu, Xiao, and Lei}{Tran
  et~al\mbox{.}}{2019}]%
        {tran2019distributed}
\bibfield{author}{\bibinfo{person}{Ngoc-Tu Tran}, \bibinfo{person}{Yan-Wu
  Wang}, \bibinfo{person}{Xiao-Kang Liu}, \bibinfo{person}{Jiang-Wen Xiao},
  {and} \bibinfo{person}{Yan Lei}.} \bibinfo{year}{2019}\natexlab{}.
\newblock \showarticletitle{Distributed optimization problem for second-order
  multi-agent systems with event-triggered and time-triggered communication}.
\newblock \bibinfo{journal}{\emph{Journal of the Franklin Institute}}
  \bibinfo{volume}{356}, \bibinfo{number}{17} (\bibinfo{year}{2019}),
  \bibinfo{pages}{10196--10215}.
\newblock


\bibitem[\protect\citeauthoryear{Wang and Olson}{Wang and Olson}{2016}]%
        {wang2016apriltag}
\bibfield{author}{\bibinfo{person}{John Wang} {and} \bibinfo{person}{Edwin
  Olson}.} \bibinfo{year}{2016}\natexlab{}.
\newblock \showarticletitle{AprilTag 2: Efficient and robust fiducial
  detection}. In \bibinfo{booktitle}{\emph{2016 IEEE/RSJ International
  Conference on Intelligent Robots and Systems (IROS)}}. IEEE,
  \bibinfo{pages}{4193--4198}.
\newblock


\bibitem[\protect\citeauthoryear{Wang, Hong, Huang, and Jiang}{Wang
  et~al\mbox{.}}{2010}]%
        {wang2010distributed}
\bibfield{author}{\bibinfo{person}{Xiaoli Wang}, \bibinfo{person}{Yiguang
  Hong}, \bibinfo{person}{Jie Huang}, {and} \bibinfo{person}{Zhong-Ping
  Jiang}.} \bibinfo{year}{2010}\natexlab{}.
\newblock \showarticletitle{A distributed control approach to a robust output
  regulation problem for multi-agent linear systems}.
\newblock \bibinfo{journal}{\emph{IEEE Transactions on Automatic control}}
  \bibinfo{volume}{55}, \bibinfo{number}{12} (\bibinfo{year}{2010}),
  \bibinfo{pages}{2891--2895}.
\newblock


\bibitem[\protect\citeauthoryear{Wang, Hong, and Ji}{Wang
  et~al\mbox{.}}{2015}]%
        {wang2015distributed}
\bibfield{author}{\bibinfo{person}{Xinghu Wang}, \bibinfo{person}{Yiguang
  Hong}, {and} \bibinfo{person}{Haibo Ji}.} \bibinfo{year}{2015}\natexlab{}.
\newblock \showarticletitle{Distributed optimization for a class of nonlinear
  multiagent systems with disturbance rejection}.
\newblock \bibinfo{journal}{\emph{IEEE Transactions on Cybernetics}}
  \bibinfo{volume}{46}, \bibinfo{number}{7} (\bibinfo{year}{2015}),
  \bibinfo{pages}{1655--1666}.
\newblock


\bibitem[\protect\citeauthoryear{Wang, Song, Hill, and Krstic}{Wang
  et~al\mbox{.}}{2018}]%
        {wang2018prescribed}
\bibfield{author}{\bibinfo{person}{Yujuan Wang}, \bibinfo{person}{Yongduan
  Song}, \bibinfo{person}{David~J Hill}, {and} \bibinfo{person}{Miroslav
  Krstic}.} \bibinfo{year}{2018}\natexlab{}.
\newblock \showarticletitle{Prescribed-time consensus and containment control
  of networked multiagent systems}.
\newblock \bibinfo{journal}{\emph{IEEE Transactions on Cybernetics}}
  \bibinfo{volume}{49}, \bibinfo{number}{4} (\bibinfo{year}{2018}),
  \bibinfo{pages}{1138--1147}.
\newblock


\bibitem[\protect\citeauthoryear{Yang, Yi, Wu, Yuan, Wu, Meng, Hong, Wang, Lin,
  and Johansson}{Yang et~al\mbox{.}}{2019}]%
        {yang2019survey}
\bibfield{author}{\bibinfo{person}{Tao Yang}, \bibinfo{person}{Xinlei Yi},
  \bibinfo{person}{Junfeng Wu}, \bibinfo{person}{Ye Yuan}, \bibinfo{person}{Di
  Wu}, \bibinfo{person}{Ziyang Meng}, \bibinfo{person}{Yiguang Hong},
  \bibinfo{person}{Hong Wang}, \bibinfo{person}{Zongli Lin}, {and}
  \bibinfo{person}{Karl~H Johansson}.} \bibinfo{year}{2019}\natexlab{}.
\newblock \showarticletitle{A survey of distributed optimization}.
\newblock \bibinfo{journal}{\emph{Annual Reviews in Control}}
  \bibinfo{volume}{47} (\bibinfo{year}{2019}), \bibinfo{pages}{278--305}.
\newblock


\bibitem[\protect\citeauthoryear{Yi, Hong, and Liu}{Yi et~al\mbox{.}}{2016}]%
        {yi2016initialization}
\bibfield{author}{\bibinfo{person}{Peng Yi}, \bibinfo{person}{Yiguang Hong},
  {and} \bibinfo{person}{Feng Liu}.} \bibinfo{year}{2016}\natexlab{}.
\newblock \showarticletitle{Initialization-free distributed algorithms for
  optimal resource allocation with feasibility constraints and application to
  economic dispatch of power systems}.
\newblock \bibinfo{journal}{\emph{Automatica}}  \bibinfo{volume}{74}
  (\bibinfo{year}{2016}), \bibinfo{pages}{259--269}.
\newblock


\bibitem[\protect\citeauthoryear{Zhao, Liu, Wen, and Chen}{Zhao
  et~al\mbox{.}}{2017}]%
        {zhao2017distributed}
\bibfield{author}{\bibinfo{person}{Yu Zhao}, \bibinfo{person}{Yongfang Liu},
  \bibinfo{person}{Guanghui Wen}, {and} \bibinfo{person}{Guanrong Chen}.}
  \bibinfo{year}{2017}\natexlab{}.
\newblock \showarticletitle{Distributed optimization for linear multiagent
  systems: Edge-and node-based adaptive designs}.
\newblock \bibinfo{journal}{\emph{IEEE Trans. Automat. Control}}
  \bibinfo{volume}{62}, \bibinfo{number}{7} (\bibinfo{year}{2017}),
  \bibinfo{pages}{3602--3609}.
\newblock


\bibitem[\protect\citeauthoryear{Zou, Zuo, and Xia}{Zou et~al\mbox{.}}{2021}]%
        {zou2021sampled}
\bibfield{author}{\bibinfo{person}{Yao Zou}, \bibinfo{person}{Zongyu Zuo},
  {and} \bibinfo{person}{Kewei Xia}.} \bibinfo{year}{2021}\natexlab{}.
\newblock \showarticletitle{Sampled-data distributed protocol for coordinated
  aggregation of multi-agent systems subject to communication delays}.
\newblock \bibinfo{journal}{\emph{Nonlinear Analysis: Hybrid Systems}}
  \bibinfo{volume}{43} (\bibinfo{year}{2021}), \bibinfo{pages}{101108}.
\newblock


\bibitem[\protect\citeauthoryear{Zuo, Song, Lewis, and Davoudi}{Zuo
  et~al\mbox{.}}{2018}]%
        {zuo2018adaptive}
\bibfield{author}{\bibinfo{person}{Shan Zuo}, \bibinfo{person}{Yongduan Song},
  \bibinfo{person}{Frank~L Lewis}, {and} \bibinfo{person}{Ali Davoudi}.}
  \bibinfo{year}{2018}\natexlab{}.
\newblock \showarticletitle{Adaptive output formation-tracking of heterogeneous
  multi-agent systems using time-varying $\mathcal{L}_2$-gain design}.
\newblock \bibinfo{journal}{\emph{IEEE Control Systems Letters}}
  \bibinfo{volume}{2}, \bibinfo{number}{2} (\bibinfo{year}{2018}),
  \bibinfo{pages}{236--241}.
\newblock


\bibitem[\protect\citeauthoryear{Zuo, Han, and Ning}{Zuo et~al\mbox{.}}{2019}]%
        {zuo2019fixed}
\bibfield{author}{\bibinfo{person}{Zongyu Zuo}, \bibinfo{person}{Qing-Long
  Han}, {and} \bibinfo{person}{Boda Ning}.} \bibinfo{year}{2019}\natexlab{}.
\newblock \bibinfo{booktitle}{\emph{Fixed-time cooperative control of
  multi-agent systems}}.
\newblock \bibinfo{publisher}{Springer}.
\newblock


\end{thebibliography}

%%%%%%%%%%%%%%%%%%%%%%%%%%%%%%%%%%%%%%%%%%%%%%%%%%%%%%%%%%%%%%%%%%%%%%%%

\end{document}